\theoremstyle{plain}
\newtheorem{theorem}{Theorem}[section]
\newtheorem{lemma}[theorem]{Lemma}
\newtheorem{corollary}[theorem]{Corollary}
\newtheorem{proposition}[theorem]{Proposition}
\newtheorem{assumption}[theorem]{Assumption}
\theoremstyle{definition}
\theoremstyle{remark}
\begin{document}
\pagenumbering{arabic}

\title{User equilibrium with a policy-based link transmission model for stochastic time-dependent traffic networks} 

\author{
	\name{Hemant Gehlot\thanks{H. Gehlot. Email: hgehlot@purdue.edu} and Satish V. Ukkusuri\thanks{CONTACT S. V.  Ukkusuri. Email: sukkusur@purdue.edu}}
	\affil{Lyles School of Civil Engineering, Purdue University, 550 Stadium Mall Drive, West Lafayette, IN 47907, USA} }
\maketitle 

\begin{abstract}
Non-recurrent congestion is a major problem in traffic networks that causes unexpected delays during travels. This type of congestion is primarily caused due to stochasticity associated with travel demand and network supply. In such a scenario, it is preferable to use adaptive paths or policies where next link decisions on reaching junctions are continuously adapted based on information gained with time. When there is no incentive for users to change their strategies, an equilibrium is reached. In this paper, we study a traffic assignment problem in stochastic time dependent networks. The problem is modeled as a fixed point problem and existence of the equilibrium solution is discussed. We iteratively solve the problem using the Method of Successive Averages (MSA). A novel network loading model inspired from Link transmission model is developed that accepts policies as inputs for solving the problem. This network loading model is different from the existing network loading models that take predefined paths for input flows. We demonstrate through numerical tests that solving traffic assignment problem with the proposed loading modeling scheme is more efficient as compared to solving the problem using path based network loading models. In addition, the developed model captures traffic realism by modeling link spillovers and shock-wave propagation through LWR kinematic wave theory. Also, we generate a fixed number of policies in each iteration of MSA algorithm where the first policy is optimal and the remaining policies are suboptimal. We propose an algorithm to generate suboptimal policies through which the quality of generated suboptimal policies can be controlled. We show the relevance of the developed suboptimal policy algorithm by proving that the optimal policy is always allocated the largest traffic flow. As a consequence, traffic flow proportion corresponding to the optimal policy can be controlled by suitably varying a parameter in the proposed suboptimal policy generation algorithm. 	
\end{abstract}
\begin{keywords}
Policy based network loading, chronological network loading, link transmission model, traffic assignment, suboptimal policies
\end{keywords}
\section{Introduction} \label{int}
\noindent Congestion can be caused due to two types of factors: recurrent and non-recurrent. Recurrent congestion usually happens when there is a gap between demand and supply during the peak hours. Whereas, non-recurrent congestion is caused due to unexpected events like incidents, vehicle breakdown, sudden influx of demand etc. Recurrent congestion can be reduced by improving the available supply but the same does not apply to non-recurrent congestion. That is because of the stochasticity associated with non-recurrent congestion. This stochasticity can either come from the demand side like sudden influx of people due to unexpected events like hurricane evacuation, from the supply side due to events like incidents or from both. \\

\noindent One of the popular strategies to reduce non-recurrent congestion is to efficiently use current infrastructure with the advances in information technology like Advanced Traveler Information Systems (ATIS) \citep{yang1998multiple}. These systems can provide information regarding which next road to take, incident locations, predicted travel times for different paths etc. Technologies like loop detectors and cameras gather network wide real time information and assist systems like ATIS in understanding the state of network. However, coverage of loop detectors and cameras on a traffic network might not be sufficient to provide information for all places along the network. For instance, typical spacing of loop detectors is about 500m but traffic
may shuffle significantly between detectors \citep{wilson2008inductance}. In recent times, there has been a steep rise in the availability of big data sources that complement traditional data collection technologies. For example, sources like GPS and cellular data can generate frequent mobility information \citep{song2010limits}. A key big-data insight from this type of data is that repeated observations
over a period of time allows for a richer characterization of mobility-demand processes. These big data sources also allow us to characterize uncertainty on various links better by mining large volumes of historical GPS based data and thus stochastic distributions can be updated in real time \citep{zhan2013urban}. The current advent of the research in Mobility-as-a-Service (MaaS) and big data analytics provides a rich opportunity to consider stochasticity in the supply side decision making.    \\

\noindent If adequate online information is available about the incident and the traveler adapts to it by taking an alternative path, he or she can save travel time compared to the nonadaptive case. These adaptive routes based on the available online information are referred to as policies or strategies. The interaction of stochastic network supply and stochastic demand along with online routing in a time dependent network results into a stochastic dynamic traffic assignment (SDTA) problem. If the users have no incentive in changing their existing strategies, a user equilibrium is said to have reached. This type of assignment problem would be the focus of this paper. \\

\noindent This study is organized in the following manner. Next section reviews the past studies related to traffic assignment and routing in stochastic time dependent networks and outlines the motivation for this study. Section 3 presents the methodology used in this paper. Section 4 presents the conducted numerical results. The final section concludes the study and provides future directions. \\

\section{Background and Motivation} \label{lr}
\noindent  In the past, there has been a growing interest among researchers to solve problems of traffic routing and assignment problems for stochastic time-dependent networks. In dynamic traffic assignment literature, there have been many works using exit functions \citep{merchant1978model,carey1992nonconvexity}, point queue models \citep{vickrey1969congestion,ramadurai2010linear,doan2011existence} and physical queue models \citep{ziliaskopoulos2000linear,ukkusuri2012dynamic,han2011complementarity}. In all these models, users travel along the paths of the underlying network, where path cannot be modified en route (\textit{path} and \textit{route} are used interchangeably in this paper). If congestion levels become very large, users may choose to switch path en route using online information rather than experience larger delays on the path corresponding to their initial choice. \cite{unnikrishnan2009user} develop a formulation for static user equilibrium where users update their route choice in an online manner. \cite{marcotte2004strategic} use the concept of strategies, and network-theoretic representation of hyperpaths for modeling static assignment problems. \cite{ukkusuri2007exploring} develop a methodology for static traffic assignment accounting for user recourse and online information perception. \cite{hamdouch2004strategic} propose a model of dynamic traffic assignment where strategic choices are an integral part of user behavior but consider constant travel delays, which might not be a realistic assumption. \cite{gao2005optimal} proposes a policy based SDTA that iteratively uses a path based network loading model to compute travel time distributions. This method converts policies to paths before utilizing a path based loader. Another method to solve this problem would be to have a traffic flow model that can directly take policies as part of the formulation. Such a model that accepts policies would involve updating current information at each time step of the network loading process. We term this type of network loading as chronological network loading. \\

\noindent In addition, a policy based network loading model should capture realistic traffic features like queue spillover and shockwave propagation. As mentioned before, there are three types of methods to solve traffic assignment problems: exit functions, point queue models and physical queue models. The use of these different traffic models leads to models with varying traffic realism, the physical queuing models being the most realistic since they capture link spillovers and shockwave propagation. Most notable physical queue models include Cell Transmission Model (CTM) \citep{daganzo1994cell} and Link Transmission Model (LTM) \citep{yperman2007link} that are based on the kinematic wave model of traffic flow \citep{lighthill1955kinematic,richards1956shock}. Recently, LTM has been shown to be the most accurate and efficient version of kinematic wave theory \citep{yperman2007link}. But existing LTM algorithms are path based and cannot directly accept policies as input. To the best of our knowledge, there does not exist a sound and an efficient traffic network loading model that can account for link spillovers and shockwaves using kinematic wave theory and does chronological loading by directly accepting policies rather than paths. We address this gap in the literature and propose a policy based network loading model through LTM approach. \\

\noindent In this paper, SDTA problem is solved using the Method of Successive Averages (MSA) approach. This requires generating policies in each iteration of the MSA algorithm based on the updated travel time information from policy based network loading model. There have been many works \citep{hall1986fastest,miller2000least} on adaptive routing choices or policies but most of them do not consider the information gained till the present time to choose next node. \cite{gao2006optimal} study optimal policy problems with stochastic dependency in a time-dependent context, i.e., the decision to take next node is dependent on the triplet: current node, arrival time at current node and current information. \cite{gao2006optimal} also study approximation algorithms for optimal policy routing but their results could be arbitrarily worse in absolute value than those obtained by running the optimal algorithm. Hence, there is a need for developing policy generation algorithms that can generate suboptimal policies whose quality in comparison to the optimal policy can be controlled. \\

\noindent In summary, this paper makes the following contributions:
\begin{itemize}
\item We introduce a chronological network loading model that accepts policies or adaptive paths instead of paths. The proposed traffic flow model is capable of representing spatial queues and shockwaves using LTM approach.
\item Solving SDTA problem with the proposed chronological loading model is shown to be more efficient as compared to the iterative solution approach that uses path based network loaders.  
\item An algorithm to generate suboptimal policies is introduced through which the quality of generated suboptimal policies can be controlled.  
\item It is proved that the optimal policy is always allocated the largest traffic flow as compared to suboptimal policies. Consequently, traffic flow for the optimal policy can be controlled by suitably varying a parameter in the proposed algorithm.
\item The solution existence of SDTA problem is discussed using fixed point theory.
\end{itemize} 
\section{Methodology} \label{method}
\noindent In this section, we first present a modeling system for the policy based SDTA, then discuss the individual components of the model and finally propose a solution heuristic. We shall first introduce the notations that will be used in this paper. Throughout this paper, we use calligraphic script to represent random variables, while the same symbols in traditional script represent instances of the random variables. 
\subsection*{Notation}
\small
\textbf{Indices:} \\
\noindent $r$:  index for realizations \\
$t, t', t''$: indices for time \\
$n$: index for nodes \\
$j$: index for upstream node of a link\\
$k$: index for downstream node of a link\\
$a,b,c$: indices for links \\
$p$: index for paths \\
$\omega$: index for policies \\
$\omega^*$: index for optimal policy \\
$\omega^s$: index for suboptimal policies \\
$o$: index for origin node \\
$d$: index for destination node \\
$l$: index for iterations \\

\noindent\textbf{Parameters:}\\
$R$: number of realizations \\
$\mathcal{D}$: demand distribution\\
$\mathcal{Q}$: flow capacity distribution (or supply distribution) \\
$Q^{r}_{c,t}$: flow capacity of link $c$ for network realization $r$ at time $t$ \\
$L_c$: length of link $c$ \\
$x_a^{L_a}$: downstream end of link $a$ \\
$x_b^0$: upstream end of link $b$ \\
$\delta_{b}^p$: equal to 1 if link $b$ is an element of path $p$, 0 otherwise \\
$\Delta t$: difference between consecutive time steps \\
$T$: simulation time duration\\
$v_{f,c}$: free-flow speed of link $c$ \\
$w_{c}$: backwave speed of link $c$ \\
$k_c$: jam density of link $c$ \\
$p_{a}$: priority constant for incoming link $a$ \\
$K$: maximum number of iterations in the SDTA model \\
$K^{\zeta}$: maximum number of iterations in the iterative network loading model \\ 
$W$: total number of policies in the SDTA model\\
$z_{\omega^s}$: modifying factor for suboptimal policy $\omega^s$ \\
$\kappa$: a parameter in policy choice model \\
$\rho_r$: probability attached with support vector of realization $r$ \\ 
$\xi$: an infinitesimal positive number \\

\noindent\textbf{Sets:} \\
$A_n$: set of incoming links of node $n$ \\
$B_n$: set of outgoing links of node $n$ \\
$P$: set of all paths between the given OD pair \\
$\Omega$: set of policies obtained from the policy generation model \\
$M$: set of all links \\
$U$: set of all nodes \\
$M'$: set of all links in the space-time graph \\
$p_l$: path set at iteration $l$ \\
$\omega_l$: policy set at iteration $l$ \\
$Z(j,t)$: set of all the information available at node $j$ and time $t$ \\
$\Theta(t)$: set of all the events at time $t$ \\
$B(j)$: set of downstream nodes of node $j$ \\

\noindent\textbf{Variables/functions:} \\
$S_{a,t}$: sending flow for link $a$ at time $t$ on downstream end \\
$R_{b,t}$: receiving flow for link $b$ at time $t$ on upstream end \\
$G_{ab,t}$: transition flow from link $a$ to link $b$ at time $t$ \\
$G_{ab,t}^p$: transition flow following path $p$ from link $a$ to link $b$ at time $t$\\
$G_{ab,t}^{\omega}$: transition flow following policy $\omega$ from link $a$ to link $b$ at time $t$\\
$N_t(x)$: aggregate cumulative number of vehicles on location $x$ at time $t$ \\
$N_t^p(x)$: cumulative number of vehicles following path $p$ on location $x$ at time $t$ \\
$N_t^{\omega}(x)$: cumulative number of vehicles following policy $\omega$ on location $x$ at time $t$ \\
$N_{\pi,t}$: cumulative demand at time $t$ \\
$N_{\pi,t}^p$: cumulative demand for path $p$ at time $t$ \\
$N^{-1}(N, x)$: time at which cumulative vehicle number at location $x$ is equal to $N$ \\
$\mathcal{C}$: link travel time distribution \\
$C^r$: link travel times corresponding to realization $r$ \\
$\mathcal{C}_l$: link travel time distribution at iteration $l$ \\
$I_l$: current information at iteration $l$ \\
$\eta_l$: policy splits at iteration $l$ \\
$\mu_l$: path splits at iteration $l$ \\
$C_{a,t}$: travel time value for link $a$ at time $t$ \\
$\delta_{ab,t}^{\omega}$: equal to 1 if link $b$ is chosen by a traveler who follows policy $\omega$ and reaches the node joining links $a$ and $b$ at time $t$, 0 otherwise\\ 
$\theta$: an event \\
$V$: function for translating policies to paths \\ 
$\gamma$: policy generation model\\
$\beta$: policy choice model \\
$\lambda$: policy based network loading model \\
$\alpha$: MSA parameter \\
$e_{\omega}(x)$: expected travel time to destination node $d$ when initial state is $x$ and policy $\omega$ is followed\\
$E_x[y]$: expected value of function $y$ in variable $x$ \\
$Pr(x)$: probability of an event $x$ happening \\
$v_r$: support point of realization $r$ \\
$h_{jk,t} $: realization of $C_{jk,t}$ learned upto current time \\
$Y_{\omega,t}$: utility of policy $\omega$ at time $t$ \\
$C'_{\omega}$: travel time distribution defining policy $\omega$ \\
$f, g$: continuous functions \\
$\chi_{ij}$: equal to 1 if link $(i,j)$ is selected in TDSP problem, 0 otherwise \\
$y_{ij,t}$: equal to 1 if link $(i,j)$ is selected at entering time $t$ in TDSP problem, 0 otherwise 

\normalsize
\subsection{Policy}
\noindent Before presenting the SDTA model, we introduce the concept of a \textit{policy}. A routing policy is a decision rule that specifies which node to take next at each decision node based on the current time and available online information. In other words, it is a mapping from a set consisting of current node, current time and current information to next node. It distinguishes from path, which is a pre-specified set of successive links between a pair of nodes. Travelers who follow a path make decisions a priori and take a fixed set of links, ignoring online information that provides changing network conditions. \\

\begin{figure}[h]
	\centering
	\includegraphics[width= 0.35\textwidth]{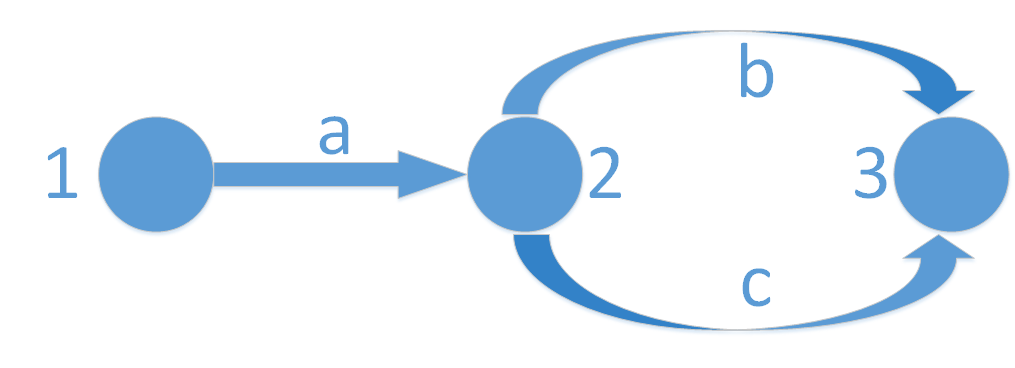} \\
\caption{An example network for explaining the concept of policy}\label{example}
\end{figure}
\begin{table}[h]
	\centering
	\tbl{Time dependent link travel times for the example network}
	{\begin{tabular}{cccc} \hline
		Time& Link&Realization 1 &Realization 2  \tabularnewline \hline
		&$a$&1&2\tabularnewline 
		$t=1$   &$b$&3&1 \tabularnewline
		&$c$&1&5\tabularnewline \hline
		&$a$&4&2\tabularnewline 
		$t=2$   &$b$&4&3 \tabularnewline
		&$c$&5&2\tabularnewline \hline
		&$a$&6&3\tabularnewline 
		$t=3$   &$b$&3&6 \tabularnewline
		&$c$&7&4\tabularnewline \hline
		&$a$&5&8\tabularnewline 
		$t=4$   &$b$&9&2 \tabularnewline
		&$c$&2&1\tabularnewline \hline
	\end{tabular}}
	\label{example_dist}
\end{table}

\noindent We illustrate the concept of policy through an example. Figure \ref{example} and Table \ref{example_dist} present an example network and the corresponding travel time distribution, respectively. The element of the travel time distribution at time $t$ and realization $r$ represents the time a traveler would experience while traveling the link if he or she enters the link at time $t$, provided the network is in realization $r$. We set time interval $\Delta t$ equal to 1 unit. Note that a traveler is aware about the realized travel times of all the links in the network for time steps before time $t$, regardless of his or her current node. This is possible due to various technologies such as observations from GPS traces, probe vehicles in the network, loop detectors and combinations of various big data sources. This type of information access is denoted as the perfect online information (POI) variant \citep{gao2005optimal}. Also, we assume that network stochasticity is characterized by complete link-wise and time-wise statistical dependencies of travel. That is, if a traveler becomes aware of the network realization with full certainty then he or she can deterministically predict the future travel times of all links in the network \citep{gao2006optimal}. At a particular time, the network can experience any one of the two realizations but the travelers are unaware about it. Travelers gain information with the progress of time (with the help of ATIS) and take subsequent decisions based on the collected information. In the example network, if a traveler departing from node 1 at $t=1$ experiences a travel time equal to 2 then ATIS realizes that network is experiencing Realization 2. The traveler reaches node 2 at time $t=3$. So, the traveler would face travel time equal to 6 if she takes link $b$ and would face travel time equal to 4 if she takes link $c$. Hence, on reaching node 2 at time $t=3$, ATIS directs the traveler to travel to link $c$. Instead, if the traveler experiences travel time equal to 1 for link $a$, the network is in Realization 1. Therefore, on reaching node 2 at time $t=2$ traveler is directed to take link $b$ instead of link $c$ as he/she would face a travel time equal to 4 for link $b$ and travel time equal to 5 for link $c$.\\           

\subsection{Model description}
\noindent SDTA model consists of the following three components: policy generation model, policy choice model and policy based network loading model \citep{gao2005optimal}. The inputs to SDTA model are demand distribution $\mathcal{D}=\{D^1,..,D^R\}$ and network supply distribution $\mathcal{Q}=\{Q^1,..,Q^R\}$, where $R$ is the number of network realizations. Note that we present the SDTA model considering a single OD pair that can be easily extended to multiple OD pairs.\\

\noindent Policy generation model $\gamma$ takes link travel time distribution $\mathcal{C}$ as the input and produces a set of policies $\Omega$:
\begin{equation}
\Omega = \gamma(\mathcal{C}) \label{gen}
\end{equation}

\noindent Policy choice model $\beta$ takes policy set $\Omega$ and link travel time distribution $\mathcal{C}$ as the inputs and produces splits (or proportions of traffic flow allocated to different policies) $\eta$:
\begin{equation}
\eta = \beta(\Omega,\mathcal{C}) \label{choice}
\end{equation}

\noindent Policy based network loading model $\lambda$ takes policy splits $\eta$, demand distribution $\mathcal{D}$ and network supply distribution $\mathcal{Q}$ as the inputs and produces link travel time distribution $\mathcal{C}$,
\begin{equation}
\mathcal{C} = \lambda(\eta, \mathcal{D}, \mathcal{Q}) \label{loading}
\end{equation}
Equations \ref{gen}, \ref{choice} and \ref{loading} together form the following fixed problem of the policy-based equilibrium:
\begin{equation}
\mathcal{C} = \lambda(\beta(\gamma(\mathcal{C}),\mathcal{C}), \mathcal{D}, \mathcal{Q}) \label{fixedpoint} 
\end{equation}
The formal definition of policy-based equilibrium is as follows \citep{gao2005optimal}: \textit{A
traffic network is in policy-based stochastic dynamic equilibrium, if each user follows
the routing policy with minimum expected travel time at his/her departure time, and no
user can unilaterally change routing policies to improve his/her expected travel time.} \\

\noindent We solve the fixed point problem using a method of successive averages (MSA) heuristic. That is, at each iteration the travel time distribution is updated by combining the results from the current iteration and previous iterations. The algorithm is as follows: \\ 

\noindent \textit{Step 0 (Initialization)}: \\
0.1 Set $l=0$ \\
0.2 $\mathcal{C}_l$ = free-flow travel times \\
0.3 $l=l+1$ \\ 
\textit{Step 1 (Main step)}: \\
1.1 $\Omega_l = \gamma(\mathcal{C}_l) $ \\
1.2 $\eta_l = \beta(\Omega_l,\mathcal{C}_l) $ \\
1.3 $\mathcal{C'} = \lambda(\eta_l, \mathcal{D}, \mathcal{Q})$ \\
1.4 $\mathcal{C}_l = (1-\alpha)\mathcal{C}_{l-1}+\alpha \mathcal{C'}$, where $\alpha = 1/l$ \\
\textit{Step 2 (Termination check)}:\\
2.1 If $l = K$, then stop. Else, $l=l+1$ and proceed to Step 1. \\

\noindent We study the convergence of this heuristic using empirical results in a later section. In the subsequent sections, we describe the details of individual components of SDTA and discuss solution existence for the fixed point problem. \\

\subsection{Policy generation model}
\noindent Denote $W$ as the predetermined number of policies generated in each iteration of the MSA algorithm. Note that a higher number of policies provides more behavioral strategies to the users but at the cost of increased computation. So, an appropriate number should be chosen based on the application. Policy generation model generates two types of policies: the optimal policy and $(W-1)$ suboptimal policies. Optimal policy generation involves obtaining the optimal policy for the input travel time distribution. Whereas, suboptimal policy generation method is a heuristic inspired from the link penalty method of static shortest path problem \citep{de1993multidimensional}. \\

\subsubsection{Optimal policy generation}
\noindent The optimal routing policy problem in a stochastic time dependent network is to find the policy with minimum expected travel times from all initial states (for all the nodes $j$, all the departure times $t$, and all the information $I$ available at node $j$ and time $t$) to destination node $d$, for a given link travel time distribution. The travel time distribution for which the optimal policy is computed is the distribution that \textit{defines} this policy. We make the following assumptions for the optimal policy problem:  
\begin{enumerate}
\item It is assumed that travelers have access to all link travel time information till the current time (through technologies like ATIS).
\item We assume complete time and spatial dependency between link travel times. 
\item All users follow online information for choosing the next link.
\item Network becomes static and deterministic after the last time step of duration $T$.
\item Different network realizations are independent of each other and the probabilities of their occurrence are known through historical records. 
\end{enumerate}

\noindent Let $e_{\omega}(x)$ denote the expected travel time to destination node $d$ when initial state is $x$ and policy $\omega$ is followed. Denote $B(j)$ as the set of downstream nodes of node $j$, $\mathcal{C}_{jk,t}|I$ as the travel time variable for link $(j,k)$ at time $t$ conditional on current information $I$, and $\mathcal{I'}$ as the current information variable at next node $k$, at time $t+\mathcal{C}_{jk,t}|I$. Denote $Z(j,t)$ as the set of all the information available at node $j$ and at time $t$. Then, for $\forall j \in U-\{d\}$, $\forall t$, $\forall I \in Z(j,t)$, $e_{\omega^*}(x)$ and $\omega^*$ are optimal if and only if they are solutions of the following system of equations \citep{gao2006optimal}:
\begin{equation}
e_{\omega^*}(j,t,I) = \min_{ k \in B(j)} \{E_{\mathcal{C}_{jk,t}}[\mathcal{C}_{jk,t}+E_{\mathcal{I'}}[e_{\omega^*}(k,t+\mathcal{C}_{jk,t},\mathcal{I'})|\mathcal{C}_{jk,t}]|I]\}  \label{optoplicy1} 
\end{equation}
\begin{equation}
\omega^*(j,t,I) = \text{arg} \min_{ k \in B(j)} \{ E_{\mathcal{C}_{jk,t}}[\mathcal{C}_{jk,t}+E_{\mathcal{I'}}[e_{\omega^*}(k,t+\mathcal{C}_{jk,t},\mathcal{I'})|\mathcal{C}_{jk,t}]|I] \}  \label{optpolicy2}
\end{equation}    
with boundary conditions: $e_{\omega^*}(d,t,I) = 0$, $ \omega^*(d,t,I) = d$, $ \forall t$, $\forall I \in Z(d,t)$. The aforementioned system of equations follows the Bellman's principle of optimality \citep{bellman1958routing} as in the static shortest path problem but is extended to take into account current time and information. Therefore, these equations are solved in the similar manner as the shortest path problem.  \\

\noindent We now explain the concept of event, a counterpart of current information, that is more convenient for the implementation of the optimal policy algorithm. Denote $v_r$ as the support point for network realization $r$, $\rho_r$ as the probability associated with $v_r$ such that $\sum_{r=1}^R \rho_r = 1 $. Let $h_{jk,t} $ be the realization of $\mathcal{C}_{jk,t}$ that is learned upto current time. The set of all links in the network is defined by variable $M$. We define event $\theta = \{v_r|C^r_{jk,t'} = h_{jk,t'}, \forall (j,k) \in M, \forall t'<t, \text{for a certain } t\}$. In short, an event at time $t$ contains all the support points that have same travel times across all the links until time $t$. Let $\Theta(t)$ be the set of all the events at time $t$. Also, $\Theta(T)=\{\{v_1\},..,\{v_R\}\}$ because of the Assumption 4 of optimal policy problem.\\

\noindent Denote $e_{\omega^*}(j,t,\theta)$ as the least expected travel time to the destination node $d$ if departure from node $j$ happens at time $t$ with event $\theta$, $\omega^*(j,t,\theta)$ as the next node that should be traveled to realize $e_{\omega^*}(j,t,\theta)$. Then, algorithm \textit{DOT-SPI} for computing optimal policy is as follows \citep{gao2005optimal}: \\ \\
\textit{Step 0 (Event generation step)}: \\
Event Generation;\\
\textit{Step 1 (Initialization step)}: \\
1.1 Compute $e_{\omega^*}(j,T,\theta) \text{ and } \omega^*(j,T,\theta), \forall j \in U, \forall \theta \in \Theta(T)$; \\
1.2 $e_{\omega^*}(j,t,\theta) = \infty, \forall j \in U-\{d\}$, \\
\hspace*{5mm} $e_{\omega^*}(d,t,\theta) = 0$, \\
\hspace*{5mm} $\forall t < T, \forall \theta \in \Theta(t)$; \\
\textit{Step 2 (Main step)}: \\
For $t=T-\Delta t$ down to $\Delta t$ \\
\hspace*{5mm} For each $\theta \in \Theta(t)$ \\
\hspace*{10mm} For each link $(j,k) \in M$ \\
\hspace*{15mm}  $temp = h_{jk,t}+\sum_{\theta' \in \Theta (t+h_{jk,t})} e_{\omega^*}(k,t+h_{jk,t},\theta')  Pr(\theta'/\Theta(t+h_{jk,t}))$; \\
\hspace*{15mm} If $temp < e_{\omega^*}(j,t,\theta)$ \\
\hspace*{20mm} $e_{\omega^*}(j,t,\theta) = temp$; \\
\hspace*{20mm} $\omega^*(j,t,\theta) = k$; \\ \\

\noindent Event Generation in Step 0 involves computing the event matrix by computing events at each time step based on the previously mentioned definition of events \citep{gao2005optimal}. Step 1 is an initialization step. First, the solutions for time $T$ are computed using the static shortest path problem (because of Assumption 4 of optimal policy problem) in Step 1.1 and then Step 1.2 involves initializing the variables for the next step. Step 2 is the main step that applies the Bellman's optimality principle to compute the solutions of Equations \ref{optoplicy1} and \ref{optpolicy2}.         
\subsubsection{Suboptimal policies generation}
\noindent We propose a heuristic based on link penalty method of deterministic shortest path problem to generate suboptimal policies. To generate suboptimal paths from the shortest path in the link penalty method, travel times of the links visited in the shortest path are increased and then shortest path is recomputed for the increased travel times. We use a similar analogy for generating suboptimal policies. \\

\noindent The idea is to increase some elements of the input travel time distribution $\mathcal{C}$ by factors $\{z_{\omega^s}\}, \omega^s \in \{1,..,W-1\}$ and run the optimal policy algorithm \textit{DOT-SPI} on the modified travel time distribution. The factors $\{z_{\omega^s}\}$ are input to the problem whose values are greater than unity and these factors determine how different are suboptimal policies as compared to the optimal policy. Only those elements of the distribution are selected for modification that are part of the optimal policy. We term this algorithm as \textit{LP-policy}. The algorithm is as follows:\\ \\
From $\omega^s=1$ to $W-1$  \\
\textit{Step 1 (Initialization step):} \\
Set $C^r_{jk,t,\omega^s} = C^r_{jk,t}, \hspace{2mm} \forall t, r, (j,k) \in M $\\
\textit{Step 2 (Travel time modification step):} \\
For time $t=T$ \\
\hspace*{5mm} For each event $\theta \in \Theta(t)$ \\
\hspace*{10mm} For each node $j \in U-\{d\}$ \\
\hspace*{15mm} Compute $k = \omega^*(j,t,\theta)$; \\
\hspace*{15mm} Set \hspace{1mm}$ C^r_{jk,t,\omega^s} = z_{\omega^s} C^r_{jk,t}  \hspace{2mm} \forall v_r \in \theta$; \\
\textit{Step 3 (Suboptimal policy assignment step):} \\
suboptimal policy $\omega^s$ = optimal policy obtained using \textit{DOT-SPI} with the modified distribution $C_{\omega^s}$;  \\ 

\noindent In this algorithm, Step 1 is the initialization step where a new travel time distribution that is equal to the original travel time distribution is generated. In Step 2, elements that are part of the optimal policy are increased in the new travel time distribution. Here, $\omega^*(j,t,\theta)$ provides the next node that is guided by the optimal policy $\omega^*$ for state $(j,t,\theta)$. In Step 3, a suboptimal optimal policy is set equal to the optimal policy of the modified travel time distribution.\\

\subsubsection{Complexity of policy generation algorithms} \label{complexitysection}
The complexity of \textit{DOT-SPI} is of $O\left(|M||T|R \ln R + R( SSP)\right)$ where $|M|$ is the number of links, $|T|$ is the number of time periods, $R$ is the number of realizations and $SSP$ is the complexity of solving the static shortest path problem \citep{gao2005optimal}. The complexity of \textit{LP-policy} is of $O(W(R|U|+DS))$, where $W$ is the number of policies, $|U|$ is the number of nodes and $DS$ is the complexity of \textit{DOT-SPI}. Therefore, the overall complexity of policy generation model is $O(W(R|U|+|M||T|R \ln R + R(SSP)))$. Since $O(|M|)>O(|U|)$, the complexity of policy generation model can be simplified to $O(W(|M||T|R \ln R + R( SSP)))$. Therefore, the worst-case complexity of overall policy generation is equal to the number of policies times the complexity of \textit{DOT-SPI}.  
\subsection{Policy choice model} 
\noindent Policy choice model splits traffic flow among the set of policies that are obtained from the policy generation model. We use a logit-based random utility model for computing the travelers' behavior. Denote $\Omega$ as the set of policies obtained from the policy generation model. Let $Pr_t(\omega|\Omega)$ be the probability of choosing policy $\omega$ by a traveler if he or she departs the origin node at time $t$. Then,
\begin{equation}
Pr_t(\omega|\Omega) = \frac{exp( Y_{\omega,t})}{\sum_{\omega \in \Omega} exp( Y_{\omega,t})} \label{Ch1}
\end{equation}   
where $exp(\cdot)$ is the natural exponential function and $Y_{\omega,t}$ is utility of policy $\omega$ at time $t$. We compute $Y_{\omega,t}$ in terms of the expected travel time a traveler takes to reach destination if he or she follows policy $\omega$ and departs at time $t$ from the origin. This expected travel time, $e_{\omega}(o,t)$, is computed as follows:
\begin{equation}
e_{\omega}(o,t)= \sum_{\theta' \in \Theta (t)} e_{\omega}(o, t,\theta')Pr(\theta'/\Theta(t)) \label{Ch2}
\end{equation}  
where $e_{\omega}(o, t, \theta')$ is the expected travel time to destination $d$ if a traveler following policy $\omega$ departs from the origin $o$ at time $t$. After this, utility $Y_{\omega,t}$  is computed as follows: 
\begin{equation}
Y_{\omega,t} = \kappa \hspace{1mm} e_{\omega}(o,t) \label{alpha}
\end{equation}  
where $\kappa$ is a constant. Note that typically, $\kappa$ is negative so that split value increases as the expected travel time reduces.\\

\noindent We apply the large sample approximation to policy choices such that the proportion of travelers taking a given policy is the same as the probability that an individual takes that policy \citep{daganzo1977stochastic}. Let $\eta_{\omega,t}$ denote the policy split for policy $\omega$ at time $t$. Then,  
\begin{equation}
\eta_{\omega,t} = Pr_t(\omega|\Omega) \hspace{10mm} \forall \omega, t \label{splitprob}
\end{equation}  
Note that the split corresponding to the optimal policy is expected to be larger than the splits for suboptimal policies because optimal policy is obtained with the objective of minimizing the expected travel time. Fortunately, this condition holds for the developed \textit{LP-policy} algorithm but might not hold true if the elements of travel time distribution are increased in a different manner in \textit{LP-policy}. We first show that if \textit{LP-policy} is used to compute suboptimal policies, the optimal policy always has the largest split value as compared to the suboptimal policies. 
\begin{proposition} \label{thm1}
\noindent If \textit{LP-policy} is used to compute suboptimal policies then the split corresponding to the optimal policy $\omega^*$ is larger than the splits of all suboptimal policies at each time $t$, i.e.
\begin{equation*}
\eta_{\omega^*, t} > \max\left(\{\eta_{\omega^{s}, t}\}\right) \hspace{10mm} \forall t
\end{equation*}
where $\omega^*$ is the optimal policy, $\omega^s$ is a suboptimal policy and $\{\eta_{\omega^{s}, t}\}$ denotes the set of splits corresponding to the suboptimal policies. 
\end{proposition}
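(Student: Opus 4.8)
The plan is to reduce the claim about splits to a claim about expected travel times, and then to establish the travel-time inequality from optimality together with the specific structure of \textit{LP-policy}. First I would observe that, by Equations \ref{splitprob}, \ref{Ch1}, and \ref{alpha}, all policy splits at a fixed departure time $t$ share the common denominator $\sum_{\omega \in \Omega} \exp(Y_{\omega,t})$, so $\eta_{\omega,t}$ is a strictly increasing function of the utility $Y_{\omega,t} = \kappa\, e_{\omega}(o,t)$. Since $\kappa$ is negative, this gives the equivalence that $\eta_{\omega^*,t} > \eta_{\omega^s,t}$ holds if and only if $e_{\omega^*}(o,t) < e_{\omega^s}(o,t)$. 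Hence the entire proposition is equivalent to showing that at every departure time $t$ the optimal policy has \emph{strictly} smaller expected travel time to the destination than each suboptimal policy produced by \textit{LP-policy}.

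The non-strict half follows immediately from the definition of the optimal policy. By Equations \ref{optoplicy1} and \ref{optpolicy2}, $\omega^*$ attains the minimum expected travel time from every initial state, and in particular from $(o,t)$; averaging over the events $\theta' \in \Theta(t)$ as in Equation \ref{Ch2} then gives $e_{\omega^*}(o,t) \le e_{\omega}(o,t)$ for \emph{every} admissible policy $\omega$, including each $\omega^s$. The real content of the proposition is therefore the strictness, and this is exactly where the particular penalization rule of \textit{LP-policy} enters.

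To obtain strictness I would argue that $\omega^s$ must differ from $\omega^*$ in a cost-increasing way. \textit{LP-policy} inflates, by the factor $z_{\omega^s} > 1$, precisely the travel times $C^r_{jk,t}$ for which $k = \omega^*(j,t,\theta)$, that is, exactly the links the optimal policy would choose, and then recomputes the optimal policy of this modified distribution via \textit{DOT-SPI}. Because every link that $\omega^*$ would traverse is penalized, following $\omega^*$ on the modified network has expected cost $z_{\omega^s}\, e_{\omega^*}(o,t)$, strictly larger than before; the recomputed optimum $\omega^s$ therefore deviates from $\omega^*$ at the decision states where the penalized link is no longer expected-cost minimizing, and at each such state it selects a next node that was strictly suboptimal under the original distribution. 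Propagating this deviation back through the Bellman recursion of Equations \ref{optoplicy1} and \ref{optpolicy2} then yields $e_{\omega^s}(o,t) > e_{\omega^*}(o,t)$. I would also contrast this with penalizing links that are \emph{not} on the optimal policy, which would leave $\omega^*$ optimal on the modified distribution and collapse the strict inequality to an equality; this is the precise sense in which the conclusion is special to \textit{LP-policy}.

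Finally, I would note that $e_{\omega^*}(o,t) < e_{\omega^s}(o,t)$ holds separately for each of the finitely many suboptimal policies and at each $t$, so applying the logit monotonicity of the first paragraph pointwise in $t$ and then taking the maximum over the finite suboptimal set yields $\eta_{\omega^*,t} > \max(\{\eta_{\omega^s,t}\})$ for all $t$, as claimed. I expect the main obstacle to be the strictness step: one must rule out the knife-edge situation in which the penalized optimal link remains expected-cost minimizing on the modified network, so that $\omega^s$ coincides with $\omega^*$ and the two splits merely tie. I would handle this by arguing that the penalization strictly raises the expected cost of every realization of $\omega^*$ while leaving at least one competing next-node choice unpenalized, so that the recomputed policy is forced to deviate and consequently incurs strictly larger expected travel time on the original network.
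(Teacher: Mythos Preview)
Your strictness argument rests on a misreading of \textit{LP-policy}. Step~2 of that algorithm is wrapped in ``For time $t=T$'': only the travel times at the final time step are inflated, not the travel times along the entire optimal trajectory. Consequently the expected cost of following $\omega^*$ on the modified distribution is \emph{not} $z_{\omega^s}\,e_{\omega^*}(o,t)$; for a traveller departing well before $T$, only the terminal static portion of the trip sees the inflated costs, and the multiplicative blow-up you invoke simply does not hold. Your deviation-and-propagation argument, and your treatment of the knife-edge case, are built on this mistaken premise and therefore do not go through.

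The paper's proof exploits precisely the structure you overlooked. Because only time-$T$ costs are modified and the network is static and deterministic for $t\ge T$, one has $e_{\omega^*}(j,T,\theta)\le e_{\omega^s}(j,T,\theta)$ at the boundary (Equation~\eqref{afterT}). For $t<T$ the one-step costs $h_{jk,t}$ and the event probabilities coincide on the original and modified distributions, so the Bellman recursion \eqref{bellman} for $\omega^s$ is identical to that for $\omega^*$ except with larger terminal values; this propagates the inequality backward step by step to all earlier $t$, and then averaging over events and passing through the logit gives the split ordering. Your one-line optimality argument for the non-strict inequality is appealing, but note that in the paper $e_{\omega^s}(j,t,\theta)$ is the optimal expected cost produced by \textit{DOT-SPI} on the \emph{modified} distribution $C_{\omega^s}$, not the cost of executing $\omega^s$ on the original distribution; these differ at $t\ge T$, so the bare appeal to optimality of $\omega^*$ does not apply verbatim, and the backward induction is doing real work. (Incidentally, the paper itself only establishes the non-strict inequality in Equation~\eqref{lastlineofproof} before asserting the strict conclusion, so the strictness you focus on is not fully resolved there either.)
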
 
\begin{proof}
Recall that in DOT-SPI algorithm, the computation of expected travel times is obtained in decreasing order of time. So, we first compare the expected travel times between the optimal policy and suboptimal policies for time $t\geq T$. In algorithm \textit{LP-policy}, travel times of some links are increased at time $T$. Thus, the shortest travel time from any node to the destination would either increase or remain the same because the network becomes deterministic and static for $t\geq T$. That is, 
\begin{equation}
e_{\omega^*}\left(j,t,\theta\right)\leq e_{\omega^s}\left(j,t,\theta\right) \hspace{10mm} \forall j\in U-\{d\}, \theta\in \Theta(t), t\geq T \label{afterT}
\end{equation}  
Note that the optimal policy is obtained using Equations \ref{optoplicy1} and \ref{optpolicy2}. That is, Bellman's optimality principle holds true \citep{bellman1958routing},
\begin{equation}
{e_{\omega^*}(j,t,\theta)}=  \min_{k \in B(j)} \left(h_{jk,t}+\sum_{\theta' \in \Theta (t+h_{jk,t})} e_{\omega^*}(k,t+h_{jk,t},\theta')  Pr(\theta'/\Theta(t+h_{jk,t}))\right) \label{bellman}
\end{equation} 
We now start comparing the expected times between optimal and suboptimal policies starting backwards from time $T-\Delta t$. We compare the terms in the RHS of Equation \ref{bellman} for both the optimal and suboptimal policies. First, $h_{jk,t}, \forall t <T$ is the same for both the original and modified travel time distributions. Second, $e_{\omega^*}(k,t+h_{jk,t},\theta')\leq e_{\omega^s}(k,t+h_{jk,t},\theta')$ from Equation \ref{afterT}. Finally, $Pr(\theta'/\Theta(t+h_{jk,t}))$ remains the same for both the original and modified distributions as probabilities associated with different events remain unchanged\footnote{We only modify travel times for the last time step, so there is no change in event matrix structure for $t<T$. Since network becomes deterministic at the last time step, all support points correspond to different events in the original distribution \citep{gao2006optimal}. Therefore, event structure of the modified distribution does not change at the last time step too.}. Therefore, the following is true from Equations \ref{afterT} and \ref{bellman}:
\begin{equation*}
e_{\omega^*}\left(j,t,\theta\right)\leq e_{\omega^s}\left(j,t,\theta\right) \hspace{10mm} \forall j\in U-\{d\}, \theta\in \Theta(t), t\geq T-1 
\end{equation*}
Similarly, proceeding recursively till the first time step, we obtain the following:
\begin{equation}
e_{\omega^*}\left(j,t,\theta\right)\leq e_{\omega^s}\left(j,t,\theta\right) \hspace{10mm} \forall j\in U-\{d\}, \theta\in \Theta(t), \forall t \label{polessthansub}
\end{equation} 
Now, we compare the expected travel times for all the events at time $t$. For time $t$, $Pr\left(\theta'/\Theta(t)\right)$ would remain the same as mentioned before. Hence, the following is true from Equation \ref{polessthansub}:
\begin{equation*}
e_{\omega^*}(o,t)= \sum_{\theta' \in \Theta (t)} e_{\omega^*}(o, t,\theta')Pr(\theta'/\Theta(t))  \leq \sum_{\theta' \in \Theta (t)} e_{\omega^s}(o, t,\theta')Pr(\theta'/\Theta(t)) \hspace{10mm} \forall t
\end{equation*}
Or,
\begin{equation*}
e_{\omega^*}(o,t) \leq e_{\omega^s}(o,t) \hspace{10mm} \forall t
\end{equation*}
Since $\kappa$ is negative in Equation \ref{alpha}, the following is true:
\begin{equation*}
Y_{\omega^*,t} \geq Y_{\omega^s,t} \hspace{10mm} \forall t
\end{equation*}
The split value of a policy is proportional to its utility by Equation \ref{splitprob}. Therefore,
\begin{equation}
\eta_{\omega^*, t} \geq \eta_{\omega^s, t} \hspace{10mm} \forall t \label{lastlineofproof}
\end{equation}
Since the condition in Equation \ref{lastlineofproof} holds for all the suboptimal policies, the following is true:
\begin{equation*}
\eta_{\omega^*, t} > \max\left(\{\eta_{\omega^{s}, t}\}\right)  \hspace{10mm} \forall t
\end{equation*}
\end{proof}
\noindent Proposition \ref{thm1} can be used to study the effect of factors $\{z_{\omega^s}\}$ on the flow allocated to the optimal policy. The next result presents the variation of optimal policy split with $\{z_{\omega^s}\}$ values.
\begin{corollary} \label{cor:largesplit}
Let there be two vectors $\{z^1\}$ and $\{z^2\}$ such that $z^1_{\omega^s}<z^2_{\omega^s} \hspace{3mm}\forall \omega^s$. Then, the split value of the optimal policy obtained when $\{z^1\}$ is used is larger than the split value obtained when $\{z^2\}$ is used for generating suboptimal policies. That is,
\begin{equation*}
\eta_{\omega^*, t}^{z^1}>\eta_{\omega^*, t}^{z^2}, \hspace{10mm} \forall t
\end{equation*} 
where $\eta_{\omega^*, t}^{z^1}$ and $\eta_{\omega^*, t}^{z^2}$ are optimal policy split values at time $t$ corresponding to vectors $\{z^1\}$ and $\{z^2\}$, respectively.
\end{corollary}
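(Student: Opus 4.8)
The plan is to reduce the claimed ordering of the optimal-policy splits to an ordering of the suboptimal-policy expected travel times, and then to transport that ordering through the logit map defined by Equations \ref{Ch2}, \ref{alpha}, \ref{Ch1} and \ref{splitprob}. The structural fact that makes this clean is that the optimal policy $\omega^*$ is generated from the unmodified distribution $\mathcal{C}$, so it is identical under the two penalty vectors; consequently $e_{\omega^*}(o,t)$, and hence the numerator $\exp(\kappa\,e_{\omega^*}(o,t))$ in Equation \ref{Ch1}, is the same for $\eta^{z^1}_{\omega^*,t}$ and $\eta^{z^2}_{\omega^*,t}$. Only the suboptimal contributions to the two denominators differ, so the whole statement reduces to comparing $\sum_{\omega^s}\exp(\kappa\,e^{z^1}_{\omega^s}(o,t))$ with $\sum_{\omega^s}\exp(\kappa\,e^{z^2}_{\omega^s}(o,t))$.

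First I would compare the two modified distributions produced by \textit{LP-policy}. Because \textit{LP-policy} rescales exactly the optimal-policy links at the terminal time $T$ and leaves every other entry untouched, the distributions built under $z^1$ and under $z^2$ coincide off those links and differ only in the size of the terminal perturbation, which is controlled by $z^1_{\omega^s}<z^2_{\omega^s}$. As the footnote to Proposition \ref{thm1} records, this terminal rescaling changes neither the event sets $\Theta(t)$ nor the conditional probabilities $Pr(\theta'/\Theta(t))$, so the two runs of \textit{DOT-SPI} are driven by identical event data and differ only through the perturbed terminal costs.

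Next I would establish the key lemma, namely the componentwise ordering of the suboptimal origin costs $e^{z^1}_{\omega^s}(o,t)$ and $e^{z^2}_{\omega^s}(o,t)$, by re-running the backward induction of Proposition \ref{thm1} between the two perturbed distributions rather than between the original and a single perturbed one. At $t\ge T$ the network is static and deterministic, so the terminal-stage expected travel times inherit a definite ordering from the rescaled terminal travel times; the Bellman recursion \ref{bellman}, with common $h_{jk,t}$ for $t<T$ and common event probabilities, then propagates that ordering to every state $(j,t,\theta)$ exactly as in Equations \ref{afterT}--\ref{polessthansub}, and averaging over $\theta'\in\Theta(t)$ via Equation \ref{Ch2} carries it to the origin for each suboptimal policy and every departure time $t$.

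Finally I would push this ordering through the logit. Since the numerator cancels in the ratio of splits, proving $\eta^{z^1}_{\omega^*,t}>\eta^{z^2}_{\omega^*,t}$ is equivalent to showing that the $z^1$-denominator is the strictly smaller of the two, and by Equation \ref{splitprob} this is governed entirely by how the two families of origin costs are ordered. I expect this last step to be the main obstacle, since it is where the direction of the stated inequality is actually fixed: the exponential $x\mapsto\exp(\kappa x)$ is strictly decreasing ($\kappa<0$), so it reverses whatever ordering the backward recursion produced, and the argument must keep the two reversals---the one carried through the recursion and the one coming from $\kappa<0$---mutually consistent so that they compose to make the $z^1$-denominator the smaller. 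Carrying the bookkeeping through in this orientation yields $\eta^{z^1}_{\omega^*,t}>\eta^{z^2}_{\omega^*,t}$ for all $t$, with the strictness inherited from the strict inequalities $z^1_{\omega^s}<z^2_{\omega^s}$ on the perturbed links.
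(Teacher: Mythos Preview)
Your proposal is correct and follows essentially the same route as the paper: both observe that the optimal policy $\omega^*$ and hence the numerator $\exp(Y_{\omega^*,t})$ are common to the two settings, reduce the claim to an ordering of the suboptimal utilities $Y^{z^1}_{\omega^s,t}$ versus $Y^{z^2}_{\omega^s,t}$, and then push that ordering through the logit denominator. The only difference is one of economy: where you propose to re-run the backward induction of Proposition~\ref{thm1} between the two perturbed distributions, the paper invokes Proposition~\ref{thm1} directly as a black box by treating the $z^1$-modified distribution as the base input and applying \textit{LP-policy} with the factor $z^2_{\omega^s}/z^1_{\omega^s}>1$, which yields the utility comparison in one line without repeating the recursion.
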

\begin{proof}
For a suboptimal policy $\omega^s$, denote the utility at time $t$ with factors $z^1_{\omega^s}$ and $z^2_{\omega^s}$ as $Y_{\omega^s,t}^{z^1}$ and $Y_{\omega^s,t}^{z^2}$, respectively. Since $z^2_{\omega^s}>z^1_{\omega^s}$, $Y_{\omega^s,t}^{z^2}>Y_{\omega^s,t}^{z^1} \hspace{2mm}\forall t$ by Proposition \ref{thm1} (First, we assume the travel time distribution with the factor $z^1_{\omega^s}$ as the input travel time distribution. Then, we can obtain travel time distribution with factor $z^2_{\omega^s}$ by applying \textit{LP-policy} on the input distribution with factor $z^2_{\omega^s}/z^1_{\omega^s}$).   

Since $Y_{\omega^s,t}^{z^2}>Y_{\omega^s,t}^{z^1} \forall \omega^s, t$ the following is true:
\begin{equation*}
\sum_{\forall \omega^s } exp( Y_{\omega^s,t}^{z^2}) > \sum_{\forall \omega^s } exp( Y_{\omega^s,t}^{z^1}) \hspace{10mm} \forall t
\end{equation*} 
Or,
\begin{equation*}
\frac{exp( Y_{\omega^*,t})}{exp( Y_{\omega^*,t})+\sum_{\forall \omega^s } exp( Y_{\omega^s,t}^{z^2})} < 
\frac{exp( Y_{\omega^*,t})}{exp( Y_{\omega^*,t})+\sum_{\forall \omega^s } exp( Y_{\omega^s,t}^{z^1})} \hspace{10mm} \forall t
\end{equation*}  
Therefore, we can say the following:
\begin{equation*}
\eta_{\omega^*, t}^{z^2}<\eta_{\omega^*, t}^{z^1} \hspace{10mm} \forall t
\end{equation*}
\end{proof}
\noindent Note that Proposition \ref{thm1} may not necessarily hold true if \textit{LP-policy} is modified so that travel times are incremented at times less than $T$. That is because even though it might take longer time for a traveler to traverse link $(i,j)$ using the suboptimal policy (if travel time for link $(i,j)$ is increased in \textit{LP-policy}), his/her travel time from node $j$ to the destination might decrease. That is because when the traveler reaches node $j$ at $t'<T$, optimal expected travel time from node $j$ to destination is dependent on the time $t'$. An example is provided to illustrate this argument in next paragraph. \\

\noindent Consider a suboptimal policy that is obtained by modifying the travel time distribution in Table \ref{example_dist}. For suboptimal policy generation, travel time value of link $a$ at time $t=1$ and Realization 2 is increased from 2 to 3. So, if the network is experiencing Realization 2 then the traveler departing at node 1 at $t=1$ experiences a travel time equal to 3 based on the modified travel time distribution. On reaching node 2 at $t=4$, he or she decides to travel to link $c$. It takes a total of 3+1=4 units to traverse the network as compared to 2+4=6 units to traverse using optimal policy. Since other elements of the distribution are not modified, the expected travel time if the network is in Realization 1 remains the same for both the original and modified distribution. Hence, the expected travel time averaged over both the realizations is larger for the original distribution as compared to the modified distribution. Consequently, larger split value is allocated to the suboptimal policy as compared to the optimal policy. Therefore,  we do not modify travel time distribution at all time steps in \textit{LP-policy}. The next proposition provides sufficient conditions that allows travel time distribution to be increased at arbitrary time steps in suboptimal policy generation but still allocates largest flow to the optimal policy.
\begin{assumption} \label{assum:monoinc}
Input travel time distribution has values that are monotonically increasing function with time. That is, 
\begin{equation*}
C^r_{jk,t'}>C^r_{jk,t''} \hspace{10mm} \forall r \in R, (j,k) \in M, t'>t''
\end{equation*}
\end{assumption}
\begin{proposition}
\noindent If Assumption \ref{assum:monoinc} is satisfied then split value corresponding to the optimal policy is the largest if \textit{LP-policy} involves modifying the elements at arbitrary time steps. That is,
\begin{equation*}
\eta_{\omega^*, t} > \max\left(\{\eta_{\omega^{s}, t}\}\right) \hspace{10mm} \forall t
\end{equation*}     
\end{proposition}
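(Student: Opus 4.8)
The plan is to reuse the backward-induction skeleton of Proposition \ref{thm1}: first prove the pointwise domination $e_{\omega^*}(j,t,\theta)\leq e_{\omega^s}(j,t,\theta)$ for all $j\in U-\{d\}$, $\theta\in\Theta(t)$ and all $t$, and then deduce the split inequality by the \emph{identical} closing argument (average the $e$-values over events as in Equation \ref{Ch2}, use the negative sign of $\kappa$ in Equation \ref{alpha}, and invoke the logit monotonicity of Equation \ref{splitprob}). The only structural change from Proposition \ref{thm1} is that, once \textit{LP-policy} may inflate entries at arbitrary times, the defining travel time $\tilde h_{jk,t}$ of the suboptimal policy (equal to $z_{\omega^s}h_{jk,t}$ on the scaled links and to $h_{jk,t}$ elsewhere, hence always $\tilde h_{jk,t}\geq h_{jk,t}$) exceeds the original $h_{jk,t}$ for times $t<T$ as well. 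Consequently a traveler on the suboptimal policy reaches the downstream node $k$ at the \emph{later} instant $t+\tilde h_{jk,t}\geq t+h_{jk,t}$. The counterexample preceding the statement shows that, in general, arriving later can lower the cost-to-go; Assumption \ref{assum:monoinc} is precisely what rules this out.

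Accordingly, my central step is an auxiliary monotonicity lemma: under Assumption \ref{assum:monoinc}, the optimal expected cost-to-go on the \emph{original} distribution, $e_{\omega^*}(j,t,\theta)$, is non-decreasing in the arrival time $t$. I would prove it by its own backward induction on $t$, with base case $t\geq T$ where the network is static and deterministic, so $e_{\omega^*}(j,\cdot,\theta)$ is constant (hence non-decreasing) in time. The inductive step rests on the observation that monotonically increasing link travel times force the FIFO ordering $t'+h_{jk,t'}\leq t''+h_{jk,t''}$ whenever $t'\leq t''$; substituting this into the Bellman recursion \ref{bellman}, together with the inductive monotonicity of the continuation terms, shows that a later departure from $j$ can only push the downstream cost-to-go upward, and taking the minimum over $k\in B(j)$ preserves the inequality. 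The delicate point is event bookkeeping: the partitions $\Theta(t)$ differ across times, so the weights $Pr(\theta'/\Theta(\cdot))$ must be tracked carefully; the clean route is to establish monotonicity realization-by-realization and then lift it to the event-weighted sum, which preserves the inequality.

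With the lemma in hand I would run the main induction for $e_{\omega^*}\leq e_{\omega^s}$. The base case $t\geq T$ is identical to Proposition \ref{thm1}: inflating links in a static deterministic network only increases shortest travel times, giving Equation \ref{afterT}. For the inductive step at time $t$, let $k^s=\omega^s(j,t,\theta)$ be the node chosen by the suboptimal policy, and write $E[\,\cdot\mid\theta]$ for the event-weighted continuation sum $\sum_{\theta'}(\cdots)Pr(\theta'/\Theta(\cdot))$ of Equation \ref{bellman}. Bounding the optimal value by the cost of this same choice on the original distribution gives $e_{\omega^*}(j,t,\theta)\leq h_{jk^s,t}+E[e_{\omega^*}(k^s,t+h_{jk^s,t})\mid\theta]$, while $e_{\omega^s}(j,t,\theta)=\tilde h_{jk^s,t}+E[e_{\omega^s}(k^s,t+\tilde h_{jk^s,t})\mid\theta]$. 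Since $h_{jk^s,t}\leq\tilde h_{jk^s,t}$, it remains to dominate the continuation terms, which I do in two moves: the monotonicity lemma gives $e_{\omega^*}(k^s,t+h_{jk^s,t},\cdot)\leq e_{\omega^*}(k^s,t+\tilde h_{jk^s,t},\cdot)$ (absorbing the arrival-time shift), and the inductive hypothesis at the strictly later time $t+\tilde h_{jk^s,t}>t$ gives $e_{\omega^*}(k^s,t+\tilde h_{jk^s,t},\cdot)\leq e_{\omega^s}(k^s,t+\tilde h_{jk^s,t},\cdot)$. Chaining the two and passing to conditional expectations closes the induction; the split inequality then follows verbatim as in the last lines of Proposition \ref{thm1}.

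I expect the monotonicity lemma to be the main obstacle, both because it is the conceptual heart of the result (it is the precise property that Assumption \ref{assum:monoinc} supplies and that fails in the counterexample) and because of the event-partition bookkeeping required to carry pointwise monotonicity through the conditional weights $Pr(\theta'/\Theta(\cdot))$. A secondary technical point is that inflating entries at times $t<T$ could alter the event structure, so that the weights appearing on the two sides of a comparison need not coincide; I would handle this by carrying out each comparison on the common refinement of the original and modified event partitions, so that identical weights appear on both sides, exactly as the footnote to Proposition \ref{thm1} guarantees automatically in the last-time-step-only case. Note also that I only require monotonicity of $e_{\omega^*}$ on the original distribution, so it is irrelevant whether the scaled distribution still satisfies Assumption \ref{assum:monoinc}.
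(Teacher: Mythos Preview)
Your proposal is correct and reaches the same conclusion as the paper, but the decomposition differs in a way worth noting. The paper does \emph{not} isolate a separate monotonicity lemma for $e_{\omega^*}$; instead it carries a strengthened cross-time inductive hypothesis, namely
\[
e_{\omega^*}(j,t,\theta)\;\leq\;e_{\omega^s}(j,t',\theta')\qquad\text{for all }t'\geq t,
\]
and proves this directly by backward induction. At each step the paper compares $h_{jk,t}$ with $z_{\omega^s,jk,t'}h_{jk,t'}$ (which is where Assumption \ref{assum:monoinc} enters, giving $h_{jk,t}\leq h_{jk,t'}$ and the FIFO ordering $t+h_{jk,t}\leq t'+z_{\omega^s,jk,t'}h_{jk,t'}$), and then invokes the strengthened hypothesis at the two later arrival times. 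Your route instead splits this into two pieces: a standalone lemma that $t\mapsto e_{\omega^*}(j,t,\theta)$ is non-decreasing under Assumption \ref{assum:monoinc}, followed by a same-time induction $e_{\omega^*}(j,t,\theta)\leq e_{\omega^s}(j,t,\theta)$ that uses the lemma to absorb the arrival-time shift $t+h_{jk^s,t}\leq t+\tilde h_{jk^s,t}$. Both are valid; your decomposition makes the role of the assumption more transparent (the counterexample fails precisely because the lemma fails), while the paper's single strengthened hypothesis avoids a separate induction. On the event-structure issue your common-refinement device is exactly what the paper does, via its footnote observing that \textit{LP-policy} modifies all support points of an event identically so the partition can be taken unchanged.
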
 
\begin{proof}
Using the same argument as in the proof of Proposition \ref{thm1}, the following is true:
\begin{equation}
e_{\omega^*}\left(j,t,\theta\right)\leq e_{\omega^s}\left(j,t,\theta\right) \hspace{10mm} \forall j\in U-\{d\}, \theta \in \Theta(t), t\geq T \label{firsteq}
\end{equation}  
Now we compare the expected travel times between the optimal and suboptimal policies for $t<T$. Expected travel times for optimal and suboptimal policies are given as:
\begin{equation*}
{e_{\omega^*}(j,t,\theta)}=  \min_{k \in B(j)} \{h_{jk,t}+\sum_{\theta' \in \Theta (t+h_{jk,t})} e_{\omega^*}(k,t+h_{jk,t},\theta')  Pr(\theta'/\Theta(t+h_{jk,t}))\}
\end{equation*}  
\begin{multline*}
{e_{\omega^s}(j,t,\theta)}=  \min_{k \in B(j)} \{z_{\omega^s,jk,t} h_{jk,t}\\+\sum_{\theta'' \in \Theta (t+z_{\omega^s,jk,t} h_{jk,t})} e_{\omega^s}(k,t+z_{\omega^s,jk,t} h_{jk,t},\theta'')  Pr(\theta''/\Theta(t+z_{\omega^s,jk,t} h_{jk,t}))\} 
\end{multline*}  
where $z_{\omega^s,jk,t}$ is a function of $z_{\omega^s}$ ($z_{\omega^s,jk,t}$ is greater than 1 if travel time of link $(j,k)$ is modified at time $t$, otherwise it is equal to 1). First, we compare the expected travel times of optimal and suboptimal policies at $t=T-\Delta t$. Since $z_{\omega^s,jk,t}\geq1$, we have $z_{\omega^s,jk,t} h_{jk,t} \geq h_{jk,t}$. Note that the event matrices for both the original and modified distributions are equivalent\footnote{The number of events at particular time do not increase in the modified travel time distribution because all support points of a particular event are identically modified in \textit{LP-policy}. However, the number of events can reduce if the events that were originally different become equivalent. In that case, we can still consider them as two different events and the output would be the same.}. Notice that we are now comparing events at different times because of the presence of $z_{\omega^s,jk,t}$ in the last equation. So, consider an event $\theta'$ at time $t+h_{jk,t}$. Now, we choose all events $\theta''$ at time $t+z_{\omega^s,jk,t} h_{jk,t}$ such that these events map all the support points of realizations containing the event $\theta'$ (this is possible as the number of events increase monotonically with time). That is,
\begin{equation}
Pr(\theta'/\Theta(t+h_{jk,t})) = \sum_{\theta''}Pr(\theta''/\Theta(t+z_{\omega^s,jk,t} h_{jk,t})) \label{4}
\end{equation}
Since $z_{\omega^s,jk,t}\geq1$, time $t+z_{\omega^s,jk,t} h_{jk,t}$ is no less than time $t+ h_{jk,t}$. Therefore, the following is true from Equation \ref{firsteq}:
\begin{equation}
e_{\omega^s}(k,t+z_{\omega^s,jk,t} h_{jk,t},\theta'') \geq e_{\omega^*}(k,t+h_{jk,t},\theta') \hspace{10mm} \forall \theta'', t \geq T-\Delta t \label{5}
\end{equation} 
From Equations \ref{4} and \ref{5}, the following is obtained:
\begin{multline*}
\sum_{\theta''}e_{\omega^s}(k,t+z_{\omega^s,jk,t} h_{jk,t},\theta'')Pr(\theta''/\Theta(t+z_{\omega^s,jk,t} h_{jk,t})) \geq \\ e_{\omega^*}(k,t+h_{jk,t},\theta')Pr(\theta'/\Theta(t+h_{jk,t}))  \hspace{10mm} \forall t \geq T-\Delta t
\end{multline*}
If we do the same analysis for all the events $\theta' \in \Theta (t+h_{jk,t})$ then the following holds true:
\begin{equation*}
e_{\omega^*}\left(j,t,\theta\right)\leq e_{\omega^s}\left(j,t,\theta\right) \hspace{10mm} \forall j\in U-\{d\}, \theta\in \Theta(t), t\geq T-\Delta t
\end{equation*}
In fact, the following holds true using the same arguments as above:
\begin{equation*}
e_{\omega^*}\left(j,t,\theta\right)\leq e_{\omega^s}\left(j,t',\theta'\right) \hspace{6mm} \forall j\in U-\{d\}, \theta\in \Theta(t), \theta'\in \Theta(t'), t'\geq t\geq T-\Delta t 
\end{equation*}
Similarly proceeding backwards in time gives the following result:
\begin{equation*}
e_{\omega^*}\left(j,t,\theta\right)\leq e_{\omega^s}\left(j,t,\theta\right) \hspace{10mm} \forall j\in U-\{d\}, \theta\in \Theta(t), \forall t
\end{equation*}  
Proceeding in the same manner as before in Proposition \ref{thm1}, we get the following:
\begin{equation*}
\eta_{\omega^*, t} > \max\left(\{\eta_{\omega^{s}, t}\}\right)  \hspace{10mm} \forall t
\end{equation*} 
\end{proof}  

\subsection{Policy based network loading model} 
\noindent Policy based network loading model consists of finding time-dependent link travel times given policy splits, stochastic demand distribution and stochastic network supply distribution. We use a LTM based approach to develop the policy based network loading model. LTM uses Newell's \citep{newell1993simplified} simplified theory of kinematic waves to propagate traffic on links and evaluates traffic dynamics by means of cumulative vehicle numbers \citep{yperman2007link}. It takes into account traffic properties such as flow and density and thereby captures link
spillovers and shockwave propagation. As in Newell's simplified theory, LTM uses a triangular fundamental diagram. A triangular fundamental diagram of a link $c$ can be defined by three parameters: fixed free-flow speed $v_{f,c}$, backwave speed $w_{c}$ and flow capacity $Q_c$. \\

\noindent As mentioned before, existing LTM algorithms take path splits as inputs rather than policy splits. So we need to have a mechanism for taking policy splits as the input. One approach is to convert policy splits to path splits and use an existing path based network loading model as a black box \citep{gao2005optimal}. We present this approach in the subsequent section using a path based LTM algorithm for the sake of completeness. After this, we propose a novel link based approach of using a chronological network loading model that directly accepts policy splits as input rather than using path splits obtained from converting policy splits.  \\ 

\subsubsection{Path based Link Transmission Model}
We first present the path based LTM algorithm \citep{yperman2007link} that would be used in iterative network loading. We denote this algorithm as \textit{PathLTM}. The algorithm is as follows:  \\ 

\noindent \textbf{\textit{PathLTM:}} \\
\noindent For each time $t$, \\
\hspace{5mm} For each node $n$ at time $t$,

\noindent \textit{Step 1 (Sending and receiving flows computation)}: \\
1.1 For each incoming link $a \in A_n $, compute the sending flow $S_{a,t}$ at the downstream end, $x_a^{L_a}$.    \\
1.2 For each outgoing link $b \in B_n $ determine the receiving flow $R_{b,t}$ at the upstream end $x_b^0$. 

\noindent \textit{Step 2 (Transition flows computation)}: \\
2.1 Compute the aggregate transition flows $G_{ab,t}$ from incoming links $a \in A_n $ to outgoing links $b \in B_n $.\\
2.2 Also, compute disaggregate transition flows $G_{ab,t}^p$ from incoming links $a \in A_n $ to outgoing links $b \in B_n $ for each path $p \in P$. 

\noindent \textit{Step 3 (Cumulative vehicles update)}: \\
3.1 For the downstream boundary of each incoming link $a \in A_n $ and for the upstream boundary of each outgoing link $b \in B_n $ update the aggregate cumulative vehicle numbers:
\begin{equation*}
N_t(x_a^{L_a})= N_{t-\Delta t}(x_a^{L_a})+\sum_{b \in B_n} G_{ab,t} \hspace{10mm} \forall  a \in A_n
\end{equation*}
\begin{equation*}
N_t(x_b^0)= N_{t-\Delta t}(x_b^0)+\sum_{a \in A_n} G_{ab,t} \hspace{10mm} \forall  b \in B_n 
\end{equation*}
3.2 Similarly, update the disaggregate cumulative vehicle numbers:
\begin{equation}
N_t^p(x_a^{L_a})= N_{t-\Delta t}^p(x_a^{L_a})+\sum_{b \in B_n} \delta_{b}^p G_{ab,t}^p \hspace{10mm} \forall  a \in A_n, p \in P \label{disagg1} 
\end{equation}
\begin{equation}
N_t^p(x_b^0)= N_{t-\Delta t}^p(x_b^0)+\sum_{a \in A_n} \delta_{b}^p G_{ab,t}^p \hspace{10mm} \forall  b \in B_n, p \in P \label{disagg2} 
\end{equation}
where $\delta_{b}^p$ is equal to 1 if link $b$ belongs to path $p$, else it is equal to 0. \\

\noindent Note that for simplicity, we consider the same level of time discretization equal to $\Delta t$ for network loading as used in the policy generation model. However, the provided network loading formulation can be easily extended with a finer time resolution for better accuracy. Next, sending flows, receiving flows, transition flows and travel time computation for \textit{PathLTM} are presented: \\

\noindent \textbf{Sending and receiving flows}:\\
Newell's sending flow formulation is as follows \citep{newell1993simplified}:
\begin{equation}
S_{a,t} = min(N_{t'} (x_a^0) - N_{t-\Delta t} (x_a^{L_a}), Q_{a} \Delta t) \label{sending}
\end{equation}
where $t'=t+\Delta t-\frac{L_a}{v_{f,a}}$. \\

\noindent Receiving flow is computed as follows:
\begin{equation*}
R_{b,t} = min(N_{t'} (x_b^{L_b}) +k_b L_b- N_{t-\Delta t} (x_b^0), Q_{b} \Delta t) 
\end{equation*}
where $t'=t+\Delta t-\frac{L_b}{w_{f,b}}$ and $k_b$ is the jam density of link $b$. \\

\noindent \textbf{Aggregate transition flows}: \\ 
The computation of transition flows depends on the type of node. We consider five types of nodes in this study: inhomogeneous, origin, destination, merge and diverge nodes. Figure \ref{nodes} presents the different types of nodes. \\ 

\begin{figure}[h]
	\centering
	\includegraphics[width= 0.9\textwidth]{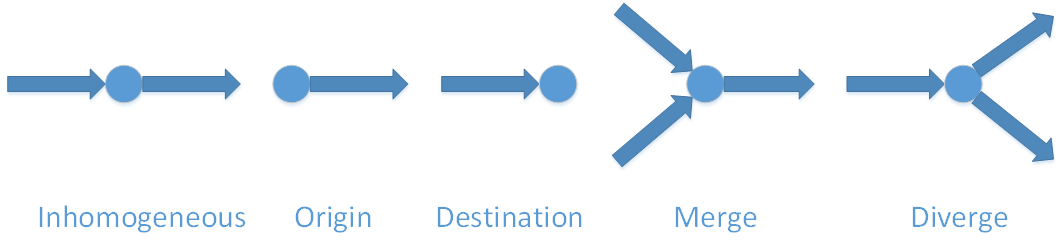}
	\caption{Different types of nodes}\label{nodes}
\end{figure}

\noindent Transition flow of an inhomogeneous node that connects an incoming link $a$ to outgoing link $b$ at time $t$ is given as follows:  
\begin{equation*}
G_{ab,t} = min(S_{a,t},R_{b,t}) 
\end{equation*}

\noindent For origin node, transition flow is:
\begin{equation*}
G_{b,t} = min(N_{\pi,t} - N_{\pi,t-\Delta t},R_{b,t}) 
\end{equation*}
where $N_{\pi,t}$ is cumulative demand at time $t$. \\

\noindent Transition flow for a destination node is simply its sending flow: 
\begin{equation*}
G_{a,t} =  S_{a,t}
\end{equation*}

\noindent Next, we compute transition flows for a merge node. We use the approach by \cite{daganzo1995cell} that assigns priority parameters $p_{a}$ and $p_{a'}$ to incoming links $a$ and $a'$, respectively. Note that $p_{a}+p_{a'}=1$. If the sum of $S_{a,t}$ and $S_{a',t}$ is not greater than the receiving flow of link $b$, then transition flows are given as follows:
\begin{equation*}
G_{ab,t} = S_{a,t} \hspace{10mm} \forall a \in A_n 
\end{equation*}
Otherwise, transition flows are as follows:
\begin{equation*}
G_{ab,t} = median(S_{a,t}, R_{b,t} - S_{a',t} , p_{a} R_{b,t}) \hspace{10mm} \forall a \in A_n 
\end{equation*}

\noindent We use the diverge node transition model given by \cite{daganzo1994cell} in our LTM algorithm:
\begin{equation*}
G_{ab,t} =  min(\frac{R_{b,t}S_{ab,t}}{S_{ab',t}}, S_{ab,t}, R_{b,t})
\end{equation*}
where $b'$ represents the outgoing link other than link $b$ and $S_{ab,t}$ denotes the fraction of sending flow $S_{a,t}$ that wants to go to link $b$ at time $t$. It is computed as follows:
\begin{equation*}
S_{ab,t} =  \sum_{p \in P} \delta_{b}^p (N_{t'}^p (x_a^0) - N_{t}^p (x_a^L)) \hspace{10mm}\forall b \in B_n  
\end{equation*}
where $t'=t+\Delta t-\frac{L_a}{v_{f,a}}$. 

\noindent \textbf{Disaggregate transition flows}: \\ 
In order to compute disaggregate transition flows, we split aggregate transfer flows based on the proportions of sending flow that want to travel to different paths.  \\

\noindent For origin node, disaggregate transition flows are as follows:
\begin{equation}
G_{b,t}^p = G_{b,t}\frac{N_{\pi,t}^p  - N_{\pi,t-\Delta t}^p }{\sum_{p \in P} \left(N_{\pi,t}^p  - N_{\pi,t-\Delta t}^p \right)+\xi} \hspace{10mm} \forall p \in P \label{origindiss}
\end{equation}
where $N_{\pi,t}^p$ is cumulative demand at time $t$ on path $p$ and $\xi$ is an infinitesimal positive number to make sure that the denominator is different from 0. It is computed as follows:
\begin{equation}
N_{\pi,t}^p=\sum_{t'\leq t}D_{t'} \mu_{p,t'} \label{splitflow}
\end{equation}
where $\mu_{p,t'}$ is proportion of flow for path $p$ at time $t'$ and $D_{t'}$ is demand at time $t'$ for the given OD pair. \\

\noindent For destination node, disaggregate transition flows are written as:
\begin{equation*}
G_{a,t}^p = G_{a,t}\frac{N_{t'}^p (x_a^0) - N_{t-\Delta t}^p (x_a^L)}{\sum_{p \in P} \left( N_{t'}^p (x_a^0) - N_{t-\Delta t}^p (x_a^L) \right)+\xi} \hspace{10mm} \forall p \in P
\end{equation*}
where $t'=t+\Delta t-\frac{L_a}{v_{f,a}}$. \\

\noindent Disaggregate transition flows for inhomogeneous and merge nodes are given as follows:
\begin{equation*}
G_{ab,t}^p = G_{ab,t}\frac{N_{t'}^p (x_a^0) - N_{t-\Delta t}^p (x_a^L)}{\sum_{p \in P} \left( N_{t'}^p (x_a^0) - N_{t-\Delta t}^p (x_a^L) \right)+\xi} \hspace{10mm} \forall a \in A_n, p \in P
\end{equation*}

\noindent For diverge nodes, disaggregate transition flows are as follows:
\begin{equation*}
G_{ab,t}^p =  G_{ab,t} \frac{\delta_{b}^p (N_{t'}^p (x_a^0) - N_{t}^p (x_a^L))}{\sum_{p \in P}\left( \delta_{b}^p (N_{t'}^p (x_a^0) - N_{t}^p (x_a^L))\right)+\xi} \hspace{10mm}\forall b \in B_n, p \in P  
\end{equation*}

\noindent \textbf{Travel time computation}: \\
Recall that the travel times computed from the network loading model are used as inputs for the optimal policy algorithm. Hence, we use the following scheme to compute the travel time $C_{a,t}$ for link $a$ at time $t$:
\begin{equation}
C_{a,t}=t-N^{-1}(N_t\left(x_a^L\right ), x_a^0) \label{tt}
\end{equation} 
Here, $N^{-1}(N, x)$ denotes the time at which cumulative vehicle number at location $x$ is equal to $N$. Based on Equation \ref{tt}, travel time of a link at time $t$ is computed as the link time travel experienced by a vehicle that departs the link at time $t$.\\

\noindent It should be noted that the correct computation of link travel times requires First-In-First-Out (FIFO) behavior on links. The presented LTM algorithm ensures approximate link FIFO behavior because sending flow might consist of vehicles having entered the link at different time intervals. However, these effects are small for practical applications and can be minimized by using shorter time intervals \citep{yperman2007link}. \\      

\noindent Note that we might need to compute cumulative vehicle numbers at times that are not multiple of $\Delta t$, for e.g. $t'$ in Equation \ref{sending}. Since \textit{PathLTM} only computes cumulative vehicle numbers at discrete time steps, we use the following interpolation method to compute cumulative vehicle numbers at a time $t'$ that lies in the interval $t$ and $t+\Delta t$: \\   
\begin{equation}
N(t') = N(t) + \left(\frac{t'-t}{\Delta t}\right)\left(N(t+\Delta t)-N(t)\right) \label{tt2}
\end{equation}
\subsubsection{Iterative network loading}
\noindent In this section, we present an iterative network loading model \citep{gao2005optimal} that uses the path based LTM algorithm, \textit{PathLTM}. The algorithm for iterative network loading model is as follows: \\

\noindent For each realization $r$, \\ 
\textit{Step 0 (Initialization step)}: \\
0.1 Set $l=0$ \\
0.2 $C^r_l$ = free-flow travel times \\
0.3 $\{\mu_l, p_l\} = V\left(\eta_l, \omega_l, C^r_l \right)$ \\ 
0.4 $l=l+1$ \\ 
\textit{Step 1 (Main step)}: \\
1.1 $C' = PathLTM(\mu_l, p_l, D^r, Q^r)$ \\
1.2 $C^r_l = (1-\alpha)C^r_{l-1}+\alpha C'$, where $\alpha = 1/l$ \\
1.3 $\{\mu_l, p_l\} = V\left(\eta_l, \omega_l, C^r_l \right)$ \\ 
\textit{Step 2 (Termination check)}:\\
2.1 If $l = K^\zeta$, then $C^r = C^r_l$ and stop. Else, $l=l+1$ and proceed to Step 1. \\

\noindent In the above model, for each realization, $r$, the demand realization $D^r$ and the supply realization $Q^r$ are taken as inputs to the iterative network loading model and travel time distribution $C^r$ is obtained as the output. Step 0 is the initialization step. In Step 0.1, we set iteration index $l$ to zero. Step 0.2 involves setting travel time distribution $C^r_l$ at iteration 0 equal to free flow travel times. Step 0.3 does a translation $V$ from policy set and policy splits to a path set and path splits with free flow travel times as the inputs. We discuss more about this translation in the next paragraph. After this, we increase iteration index by one. Step 1 is the main loop of the algorithm. Step 1.1 involves obtaining travel time distribution from \textit{PathLTM}. Next, we update the travel time distribution using the method of successive averages (MSA) algorithm. In Step 1.3, we convert policy set and policy splits using the travel time distribution obtained from the previous step into path set and path splits, respectively. Finally, we check if the iterations have reached the limit $K^\zeta$ in Step 2. If the limit is reached, we stop and set the travel time distribution from the previous iteration as the output travel time distribution for realization $r$. Otherwise, we proceed to Step 1.  \\

\noindent Now we discuss about translation $V$. The translation algorithm proceeds as follows: \\
For each time $t$,  \\
For each policy $\omega$ at time $t$, 
\begin{enumerate}
\item Set $t'=t$, where $t'$ keeps track of the time as we start traversing from the origin at time $t$ towards the destination.
\item Choose an event $\theta \in \Theta(t)$ such that the difference between current information on travel time distribution, $I_l = \{C^r_{l,ab,t''}| \forall (a,b) \in M, \forall t''<t'\}$ and distribution defining current policy $C'_{\omega}$ is the least. This difference can be defined in terms of the sum of absolute difference between all the elements of $I_l$ and $C'_{\omega}$ till time $t'$.
\item Choose the next node with the obtained event. Update time $t'$, by adding the expected time to travel from current node to the next node by policy $\omega$'s definition to it. 
\item If destination node is reached, stop. Add the policy $\omega$'s split to the obtained path's split value. Else, go to Step 1. \\
\end{enumerate}

\subsubsection{Chronological network loading} \label{chrono}
\noindent As presented in the last section, iterative network loading repeatedly uses \textit{PathLTM} as a black box and employs a translation function for conversion between paths and policies. In this section, we present a network loading scheme that only employs one iteration of network loading. This scheme takes policy set and policy splits as inputs, so it requires development of a novel LTM algorithm. We term this LTM algorithm as \textit{PoLTM}. The algorithm is as follows:\\   

\noindent \textit{\textbf{PoLTM}} \\
For each realization $r$, \\
\textit{Step 0 (Initialization step)}: \\
0.1 Set $t$ =0, \\
0.2 Current information $I_t$ = free flow travel times  \\ 
\noindent \textit{Step 1 (Link-policy incidence matrix computation)}: \\
1.1 For each policy $\omega$, find an event such that the difference\footnote{Difference between the two distributions can be defined in many ways depending on norm used in computing the difference. For instance, we compute the sum of absolute differences of all the elements in the two distributions till the current time.} between travel time distribution defining the current policy $C'_{\omega}$ and current information $I_t$ is the least. \\
1.2 Let $\delta_{ab,t}^{\omega}$ be equal to 1 if link $b$ is chosen by a traveler who follows policy $\omega$ and reaches the node joining links $a$ and $b$ at time $t$, 0 otherwise. Compute $\delta_{ab,t}^{\omega}$, $\forall ab, \omega $, at time $t$ using policies' definition and the events computed in previous step. \\
\noindent \textit{Step 2 (Main step)}: \\
For each node $n$ at time $t$,\\
2.1 For each incoming link $a \in A_n $, compute the sending flow $S_{a,t}$ at the downstream link end ($x_a^L$), and for each outgoing link $b \in B_n $, determine the receiving flow $R_{b,t}$ at the upstream link end ($x_b^0$). \\
2.2 Compute the aggregate transition flows $G_{ab,t}$ from incoming links $a \in A_n $ to outgoing links $b \in B_n $.\\
Also, compute disaggregate transition flows $G_{ab,t}^{\omega}$ from incoming links $a \in A_n $ to outgoing links $b \in B_n $ for each policy $\omega \in \Omega$. \\
2.3 For the downstream link boundary of each incoming link $a \in A_n $ and for the upstream link boundary of each outgoing link $b \in B_n $ update the aggregate cumulative vehicle numbers:  	
\begin{equation*}
N_t(x_a^L)= N_{t-\Delta t}(x_a^L)+\sum_{b \in B_n} G_{ab,t} \hspace{10mm} \forall  a \in A_n
\end{equation*}
\begin{equation*}
N_t(x_b^0)= N_{t-\Delta t}(x_b^0)+\sum_{a \in A_n} G_{ab,t} \hspace{10mm} \forall  b \in B_n 
\end{equation*}

\noindent Similarly, update the disaggregate policy cumulative vehicle numbers:
\begin{equation}
N_t^{\omega}(x_a^L)= N_{t-\Delta t}^{\omega}(x_a^L)+\sum_{b \in B_n} \delta_{ab,t}^{\omega} G_{ab,t}^{\omega} \hspace{10mm} \forall  a \in A_n, \omega \in \Omega \label{polydisagg1}
\end{equation}
\begin{equation}
N_t^{\omega}(x_b^0)= N_{t-\Delta t}^{\omega}(x_b^0)+\sum_{a \in A_n} \delta_{ab,t}^{\omega} G_{ab,t}^{\omega} \hspace{10mm} \forall  b \in B_n, {\omega} \in \Omega \label{polydisagg2} 
\end{equation}
\textit{Step 3 (Travel time update and termination check)}: \\
3.1 Append the obtained link travel times from Step 2 to current information $I_t$. \\
3.2 If $t=T$, then $I_t$ is output travel time distribution for realization $r$ and we stop the algorithm. Else, $t=t+\Delta t$ and proceed to Step 1. \\

\noindent Note that the above algorithm differs from iterative network loading method in the following aspects: 
\begin{enumerate}
\item \textit{PoLTM} takes a policy set and policy splits as the inputs whereas iterative network loading iteratively uses \textit{PathLTM} that takes a path set and path splits as inputs. Also, the update of disaggregate cumulative numbers in \textit{PoLTM} is in terms of policies in comparison to path based disaggregation in iterative network loading. 
\item In \textit{PoLTM}, current information $I$ is updated at each time step as new travel times are obtained. That is why we term \textit{PoLTM} as a chronological network loading model. Here, travelers' decision on choosing the next node is a function of the updated current information. Therefore, $\delta_{ab,t}^{\omega}$ is a function of time unlike $\delta_{b}^{p}$, which is predefined. In iterative loading, current information is updated at the end of each iteration $l$ of the algorithm. The role of current information in iterative network loading comes during the translation of policies to paths using $V$  but there is no role of information in the LTM algorithm as predefined paths are used in \textit{PathLTM}. \\
\end{enumerate}

\noindent It is worth pointing out that in spite of the above stated differences, \textit{PoLTM} follows kinematic wave theory of \cite{newell1993simplified}. The equations for sending flows and receiving flows remain the same as in \textit{PathLTM} and hence we do not present them again. The equations for transition flows and cumulative vehicles update also remain the same except that the disaggregate transition flows and disaggregate cumulative numbers are now updated over polices than on paths. Empirical tests presented in a later section show that chronological network loading is found to be more efficient than iterative network loading on different test networks.

\subsection{Solution existence of fixed point problem}
In this section, solution existence for the fixed point problem of Equation \ref{fixedpoint} is discussed. Solution existence is established using Brouwer's fixed point theorem:
\begin{lemma} \label{brouwer}
\cite{facchinei2007finite} (Theorem 2.1.18): Let $x \subset {\rm I\!R}^n$ be a nonempty convex compact set. Every continuous function $f: x \rightarrow x$ has a fixed point in $x$.
\end{lemma}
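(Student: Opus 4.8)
The statement is the classical Brouwer fixed point theorem, and the plan is to prove it by reducing the general convex compact domain to a standard simplex and then invoking Sperner's lemma. First I would observe that every nonempty convex compact set $x \subset {\rm I\!R}^n$ is homeomorphic, via some homeomorphism $\phi$, to the standard $m$-simplex $\Delta^m$, where $m$ is the dimension of the affine hull of $x$ (the single-point and lower-dimensional cases being trivial or handled by first restricting to the affine hull). Since the fixed point property is a topological invariant --- if $f: x \to x$ is continuous then $g = \phi \circ f \circ \phi^{-1} : \Delta^m \to \Delta^m$ is continuous and its fixed points correspond bijectively, through $\phi$, to those of $f$ --- it suffices to prove that every continuous self-map of $\Delta^m$ has a fixed point.

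For a continuous self-map $g$ of $\Delta^m$, I would set up the combinatorial argument as follows. Fix a triangulation of $\Delta^m$ and, writing each point in barycentric coordinates, color every vertex $v$ by an index $i$ such that the $i$-th barycentric coordinate of $v$ is positive but does not increase under $g$; such an index exists because the coordinates of $v$ and of $g(v)$ each sum to one, so at least one positive coordinate cannot grow. A direct check shows this coloring meets the boundary hypothesis of Sperner's lemma, which therefore produces a fully-colored (``rainbow'') sub-simplex whose vertices carry all $m+1$ colors. Repeating this over a sequence of triangulations whose mesh tends to zero yields a sequence of shrinking rainbow simplices; by compactness of $\Delta^m$ their vertices share a convergent subsequence with a common limit $y^*$, at which $g_i(y^*) \le y^*_i$ for every coordinate $i$. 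Because both $y^*$ and $g(y^*)$ have coordinates summing to one, these inequalities must all be equalities, giving $g(y^*) = y^*$.

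An alternative route I would keep in reserve formulates the result through the impossibility of a retraction of $x$ (taken as a closed ball after the homeomorphic reduction) onto its boundary. If $f$ had no fixed point, the map sending each point $y$ to the point where the ray from $f(y)$ through $y$ meets the boundary would be a continuous retraction onto the boundary sphere, which does not exist. This packages the topological content into the no-retraction theorem, provable by either elementary homology (the $(m-1)$-st homology of the ball vanishes while that of the sphere does not, so a retraction would factor the identity of the sphere through a trivial group) or a smooth-approximation and degree argument.

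The hardest part is the genuine topological content, which no amount of elementary real analysis can supply. In the Sperner route it is split between proving Sperner's lemma itself --- a parity/induction count of fully-colored faces across dimensions --- and the limiting argument that upgrades the sequence of combinatorial rainbow simplices into an actual fixed point. In the no-retraction route the entire difficulty is concentrated in establishing that no continuous retraction of the ball onto its boundary exists, which is where algebraic- or differential-topological machinery becomes indispensable.
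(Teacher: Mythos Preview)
Your argument is correct: the Sperner-coloring route you describe is the standard combinatorial proof of Brouwer's theorem, and the no-retraction alternative is equally valid. One small point worth making explicit is that a nonempty convex compact set always has nonempty relative interior, which is what guarantees the homeomorphism to a closed ball (and hence to a simplex) in the dimension of its affine hull; you gesture at this but do not state it.

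However, there is nothing to compare your proof against: the paper does not prove this lemma. It is quoted verbatim as Theorem~2.1.18 from Facchinei and Pang and used as a black box to establish existence of the policy-based equilibrium in the proposition that follows. The paper's effort goes entirely into checking the hypotheses of the lemma --- that the set of policy splits is a nonempty convex compact simplex and that the composite map $\beta\circ\gamma\circ\lambda$ is continuous --- rather than into proving Brouwer itself. So you have supplied a proof where the authors, reasonably for an applied paper, simply cited one.
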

\noindent We prove the solution existence with the following assumption:
\begin{assumption} \label{assmp2}
No two network realizations have identical link travel times at a particular time step. That is, for a pair of network realizations $r_1$ and $r_2$,  there exists at least one link $(j,k)$ at each time step $t$ such that $C_{jk,t}^{r_1}\neq C_{jk,t}^{r_2} $.	
\end{assumption}
\noindent Assumption \ref{assmp2} implies that network realizations do not have full overlaps and therefore can be fully determined deterministically\footnote{This is a reasonable assumption as one can expect small differences between any two realizations}. The optimal policy problem under this assumption is similar to the Wait-and-see (WS) or full information problem in literature \citep{gao2006optimal}. Under this assumption, optimal policy problem reduces to multiple time-dependent all-to-one shortest path problems, each corresponding to one of the possible network realizations. With this assumption, we now present the result for solution existence.

\begin{proposition}
If Assumption \ref{assmp2} is satisfied, then the fixed point problem of Equation \ref{fixedpoint} has a solution.
\end{proposition}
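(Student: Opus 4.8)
The plan is to verify the three hypotheses of Brouwer's theorem (Lemma \ref{brouwer}) for the composite self-map $\Phi(\mathcal{C}) = \lambda(\beta(\gamma(\mathcal{C}),\mathcal{C}),\mathcal{D},\mathcal{Q})$ that defines the fixed point problem of Equation \ref{fixedpoint}. First I would fix the domain. Since $\mathcal{C}$ is a finite collection of link travel times $C^r_{jk,t}$ indexed by the finitely many realizations $r$, links $(j,k)\in M$, and time steps $t\leq T$, it is a point of $\mathbb{R}^n$ with $n=R|M||T|$. Each coordinate is bounded below by the link's free-flow travel time $L_{jk}/v_{f,jk}$ and bounded above by the travel time a vehicle experiences when the link is fully congested over the horizon $T$; let $X=\prod[c^{\min},c^{\max}]$ be the resulting box. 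Then $X$ is nonempty, convex, and compact, so it is a legitimate domain for Lemma \ref{brouwer}.

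Second I would establish the self-map property $\Phi(X)\subseteq X$. This should follow from the structure of the loading model: the travel times returned by Equation \ref{tt} can never be smaller than the free-flow time, since a vehicle cannot traverse a link faster than $v_{f,jk}$, and never exceed $c^{\max}$ because the cumulative curves obey the LTM flow-capacity and storage constraints on the bounded horizon. Hence $\lambda$, and therefore $\Phi$, sends $X$ into $X$ irrespective of the splits $\eta$ fed into it. The substantive step is continuity of $\Phi$ on $X$, which I would obtain by composing continuity of the three constituent maps. The loading map $\lambda$ is continuous in the splits $\eta$: sending, receiving, and transition flows are built from $\min$, $\mathrm{median}$, and nonnegative linear combinations, and the interpolation of Equation \ref{tt2} is affine, so the output distribution depends continuously on its inputs. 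The choice map $\beta$ is the logit expression of Equations \ref{Ch1}--\ref{splitprob}, a continuous function of the vector of expected policy travel times. It then remains to show that $\mathcal{C}\mapsto\{e_\omega(o,t)\}$ is continuous, and here Assumption \ref{assmp2} does the heavy lifting: because the realizations are distinguishable at every time step, the optimal policy problem decouples into one time-dependent deterministic shortest-path problem per realization, so $e_{\omega^*}(o,t)$ is a finite convex combination of shortest-path costs, each a minimum of finitely many continuous piecewise-linear functions of the arc costs, hence continuous.

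The hard part will be the continuity of the \emph{suboptimal} policy costs produced by \textit{LP-policy}. The suboptimal routing rule is the argmin computed on the \emph{modified} distribution $C_{\omega^s}$, and at a value of $\mathcal{C}$ where this argmin is non-unique the chosen next node can switch discontinuously; since the expected cost of Equation \ref{Ch2} evaluates the rule against the \emph{original} distribution, two rules that tie on the modified network need not tie on the original one, so $e_{\omega^s}(o,t)$ threatens to jump. The optimal term is safe, because at its own tie the competing optimal policies carry identical expected cost and therefore contribute identical logit terms, but the suboptimal terms require care. I would control this by showing that the set of $\mathcal{C}\in X$ at which any relevant shortest path is non-unique is contained in a finite union of lower-dimensional surfaces, arguing continuity of $\Phi$ on the complement, and then verifying that the splits extend continuously across these surfaces. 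Proposition \ref{thm1}, which guarantees the ordering $\eta_{\omega^*, t} > \max(\{\eta_{\omega^{s}, t}\})$, is the tool I expect to use to keep the split mass inside the simplex and to bound the contribution of the switching suboptimal policies as the tie is approached from either side.

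Once continuity of $\Phi$ on the compact convex set $X$ is secured, Brouwer's theorem (Lemma \ref{brouwer}) immediately yields a fixed point $\mathcal{C}^*=\Phi(\mathcal{C}^*)$, which is precisely a solution of Equation \ref{fixedpoint}. I expect the domain and self-map steps to be routine, the continuity of $\lambda$, $\beta$, and the optimal-policy cost to be straightforward under Assumption \ref{assmp2}, and the genuine obstacle to be the continuous behavior of the suboptimal-policy splits at the argmin-switching surfaces; this is the one place where the proof must be argued with care rather than by direct composition.
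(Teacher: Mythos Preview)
Your proposal follows the same high-level strategy as the paper---verify continuity of the composite map and invoke Brouwer (Lemma \ref{brouwer})---but differs in one structural choice and one point of rigor.

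The structural difference is the domain. You apply Brouwer on a box $X$ of travel-time distributions and must then manufacture bounds $c^{\min},c^{\max}$ and check the self-map property $\Phi(X)\subseteq X$. The paper instead reformulates the fixed point in terms of the splits,
\[
\eta=\beta(\gamma(\lambda(\eta)),\lambda(\eta)),
\]
and applies Brouwer directly on the simplex $\{\eta:\sum_\omega \eta_{\omega,t}=1,\ \eta_{\omega,t}\geq 0\}$, which is nonempty, convex, and compact for free. This removes your bounding and self-map steps entirely; the price is that continuity must be argued with $\lambda$ first (in $\eta$) and then $\gamma,\beta$ (in $\mathcal{C}$), the reverse of your order of composition.

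On the continuity of $\gamma$, the paper is \emph{less} careful than you. It observes that under Assumption \ref{assmp2} the optimal policy problem decomposes into independent time-dependent shortest-path problems, one per realization, then invokes an external continuity result for shortest paths (alternatively exhibiting an LP formulation of TDSP and appealing to the KKT conditions) and declares $\gamma$ continuous. It does not separate continuity of the policy, which is a discrete object, from continuity of the expected cost that actually enters the logit, and it does not isolate the argmin-switching issue for the \textit{LP-policy} suboptimal policies that you correctly flag. Your analysis here is more scrupulous, though your proposed resolution---confining ties to lower-dimensional surfaces and matching limits across them, leaning on Proposition \ref{thm1} to control the suboptimal mass---remains a sketch rather than a completed argument. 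At the level of rigor the paper adopts, the proof simply cites shortest-path continuity and moves on.
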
  
\begin{proof}
We reformulate the fixed point problem of Equation \ref{fixedpoint} for the purpose of simplifying the proof. The fixed point problem is reformulated in terms of policy splits $\eta$ as follows:
\begin{equation*}
\eta=\beta(\gamma(\lambda(\eta)), \lambda(\eta))
\end{equation*}
The input parameters (demand and supply distributions) are ignored in the reformulation since they do not form a fixed point mapping. Note that $\eta$ is a nonempty convex compact set because of the following conditions:
\begin{equation*}
\sum_{\forall \omega}\eta_{\omega,t}=1 \hspace{10mm} \forall t
\end{equation*} 
\begin{equation*}
\eta_{\omega,t}\geq 0 \hspace{10mm} \forall \omega,t
\end{equation*}
We now discuss the continuity of network loading model $\lambda$ in terms of policy splits $\eta$. Consider the function $N=\textit{PoLTM}(\eta)$ where $N$ is the set of cumulative vehicle numbers obtained using \textit{PoLTM} when the set of input policy splits is $\eta$. The function \textit{PoLTM} is implicitly defined from the algorithm and equations of \textit{PoLTM} in Section \ref{chrono}. The continuity of this function can be trivially analyzed as  the equations and steps in \textit{PoLTM} are formed from basic elementary arithmetic operations on continuous functions, which result into continuous functions \citep{rudin1964principles}. Some equations such as those for computing sending flows consist of non-linear functions like $\min$, require careful analysis but they are also continuous because for two continuous functions $f$ and $g$, $\min{(f,g)}=\frac{f+g}{2}-\frac{|f-g|}{2}$ \citep{rudin1964principles}. Therefore, $N=\textit{PoLTM}(\eta)$ is continuous in $\eta$. Also, from Equation \ref{tt} link travel times are continuous functions of cumulative numbers. That is because cumulative numbers are interpolated using a continuous function in Equation \ref{tt2} and inverse of a continuous function on a compact metric space is also continuous \citep{rudin1964principles}. Therefore, the function $\mathcal{C}=\lambda(\eta)$ is a continuous function in $\eta$. \\ 

\noindent Next, $\Omega=\gamma(\mathcal{C})$ is a continuous function if Assumption \ref{assmp2} holds. That is because if Assumption \ref{assmp2} holds, then solving optimal policy problem is equivalent to solving multiple time dependent shortest path (TDSP) problems \citep{hall1993time}. Note that TDSP can be solved by applying static shortest path problem on the expanded space-time graph \citep{pallottino1998shortest}. Since the continuity of shortest path problems is established \citep{chaudhuri2010continuity}, the continuity of TDSP also follows. Interested readers may refer to \cite{chaudhuri2010continuity} for the details related to showing continuity for shortest path algorithms. Alternatively, TDSP  for a single departure time $t^0$ can be formulated as a linear program using a space-time expansion of the physical network:
\begin{equation*}
\min \sum_{(i,j)\in M} \sum_{t \in \{t^0,\ldots,T\}} C_{ij,t}y_{ij,t}
\end{equation*}
\begin{equation*}
\text{subject to: }\sum_{(i,j)\in M}\chi_{ij}-\sum_{(j,i)\in M}\chi_{ji}=
\begin{cases}
1, & i=o \\
-1, &i=d \\ 
0, & otherwise
\end{cases}
\end{equation*}
\begin{equation*}
\sum_{(i,j)\in M} \chi_{ij}\leq 1,\hspace{4mm} \forall i \in U
\end{equation*}
\begin{equation*}
\sum_{(i_t,j_{t'})\in M'}y_{ij,t}-\sum_{(j_{t'},i_t)\in M'}y_{ji,t'}=
\begin{cases}
1, & i=o, t=t^0 \\
-1, &i=d, t=T \\ 
0, & otherwise
\end{cases}
\end{equation*}
\begin{equation*}
\sum_{t \in \{t^0,\ldots,T\}} y_{ij,t}=\chi_{ij}, \hspace{4mm} \forall (i,j)\in M
\end{equation*}
\begin{equation*}
\chi_{ij}\geq 0,\hspace{4mm} \forall (i,j)\in M 
\end{equation*}
\begin{equation*}
y_{ij,t}\geq 0, \hspace{4mm} \forall (i,j)\in M, t \in \{t^0,\ldots,T\}
\end{equation*}

\noindent Here $\chi_{ij}$ is 1 if link $(i,j)$ is selected in the path, 0 otherwise. $M'$ is the set of edges in a space-time graph that is expanded from the physical network and time-varying link travel times. Each physical node $i$ has a node in space time graph corresponding to each time step $t \in \{t^0,\dots,T\}$. Links in space-time graph are governed based on the presence of links in physical network. Travel time values in space-time graph are assigned based on the values $C_{ij,t}$ from travel time distribution. Therefore, $y_{ij,t}$ is equal to 1 if link $(i,j)$ is selected at entering time $t$, 0 otherwise. For more details of the formulation, readers should refer to \cite{yang2014constraint}. Since Karush-Kuhn-Tucker (KKT) conditions represent necessary and sufficient conditions for optimal solutions of a linear program, continuity of TDSP in terms of input travel times can be confirmed by solving the KKT conditions. \\ 

\noindent Also, continuity of policy choice model $\beta$ follows from Equations \ref{Ch1}, \ref{Ch2}, \ref{alpha} and \ref{splitprob} as they involve elementary arithmetic operations on continuous functions. Since a composition of continuous functions ($\lambda$, $\gamma$ and $\beta$) is also continuous \citep{rudin1964principles}, solution existence of the fixed problem follows from Lemma \ref{brouwer}.           
\end{proof}                  
\section{Results} \label{results}
\noindent This section presents the results from conducted numerical tests. First, we present empirical findings on small networks that can be fully interpreted. Later, we present the results on larger networks to show the applicability of our methods to reasonably sized networks.  
\subsection{Small networks}
We conduct tests on two synthetic networks denoted as \textit{TwoLinks} and \textit{Diamond} networks shown in Figures \ref{smallnetworksdia_2link} and \ref{smallnetworksdia_sdmd}, respectively. Tables \ref{2linksdata} and \ref{diamonddata} present the link data for the two networks. In \textit{TwoLinks} network, the OD pair consists of nodes 1 and 3. In \textit{Diamond} network, nodes 1 and 7 form the OD pair. \\

\begin{figure}[h]
	\centering
\includegraphics[width=0.35\textwidth]{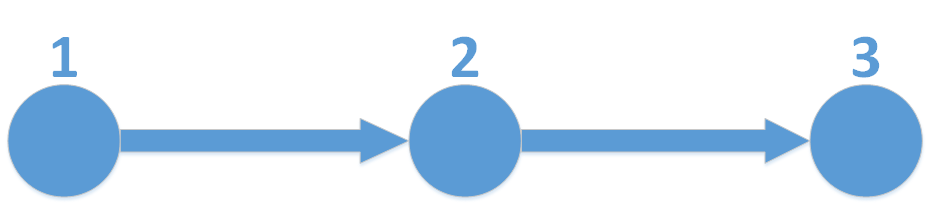}
	\caption{\textit{TwoLinks} network} \label{smallnetworksdia_2link}
\end{figure}
\begin{figure}[h]
	\centering
	\includegraphics[width=0.5\textwidth]{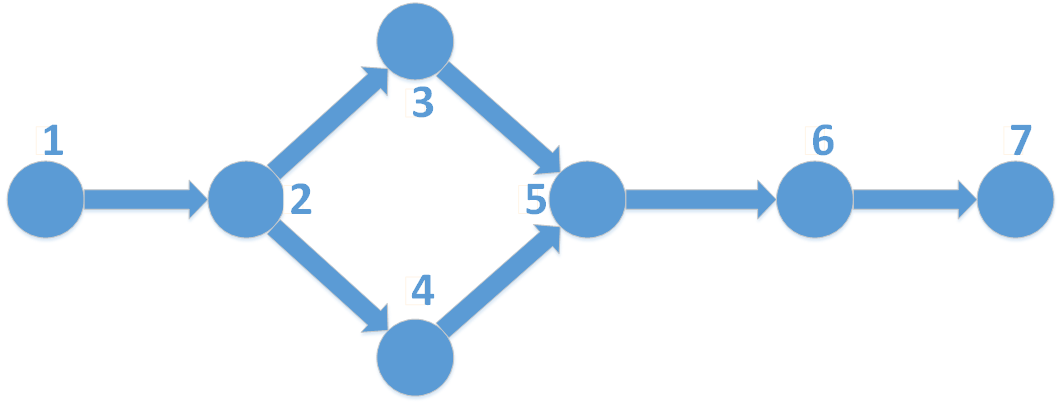}
	\caption{\textit{Diamond} network} \label{smallnetworksdia_sdmd}
\end{figure}

\begin{table}[h]
\centering
\tbl{Link data for \textit{TwoLinks} network}
{	\begin{tabular}{ccc} \hline
		Parameters&Link (1,2) &Link (2,3)   \tabularnewline \hline
		$L_c$ (m)& 860&1220 \tabularnewline \hline
		$v_f,c$ (m/s) & 10&20 \tabularnewline \hline
		$w_c$ (m/s) & 5&10 \tabularnewline \hline
	\end{tabular}}
	\label{2linksdata}
\end{table}
\begin{table}[h]
\centering
\tbl{Link data for \textit{Diamond} network}
{\begin{tabular}{cccccccc} \hline
		Parameters&Link (1,2) &Link (2,3) &Link (2,4)&Link (3,5) & Link (4,5)&Link (5,6)&Link (6,7)  \tabularnewline \hline
		$L_c$ (m)& 860&1220&1360&1220&610&610&610 \tabularnewline \hline
		$v_f,c$ (m/s) & 10&15&20&10&15&20&20 \tabularnewline \hline
		$w_c$ (m/s) & 5&10 &10&5&7.5&10&10\tabularnewline \hline
	\end{tabular}}
	\label{diamonddata}
\end{table}

\noindent We consider three realizations for both the networks with each realization having an equal chance of occurring. For both the networks, we consider supply stochasticity in the link connecting nodes 2 and 3. For this link, we consider three traffic states corresponding to the three network realizations: (i) normal, (ii) congested and (iii) highly congested. In the congested and highly congested states, capacity of the link reduces by 50\% and 75\%, respectively. The remaining links operate in normal conditions. However, there are small variations in capacities of all the links with time regardless of the variations with network realizations. Also, we generate demand distribution where each element of the distribution $D_t^r$ is a uniformly randomly generated valued from 4000 to 4100 vehicles per hour in the first half of simulation duration and from 4000 to 5000 vehicles per hour in the second half. We set the time resolution $(\Delta t)$ to be 1 second and simulate for 600 time steps. We implement the code in Java and run on a Intel Core i7 processor with 3.4 GHz CPU speed and 16 GB RAM.       \\

\noindent We divide our results into five categories: convergence study, computational times comparison of different loaders, variation of average travel time with traffic states, sensitivity analysis of average travel time with the level of variation and sensitivity analysis of policy splits towards perturbation in travel time distribution. \\
\subsubsection{Convergence study}
\noindent In this section, we study the convergence of SDTA algorithm using the two aforementioned types of network loading models. These results build the foundation for the later results. Note that we set the maximum number of iterations for iterative loader to $5$ as we find that after 5 iterations the relative differences in travel times reduce to less than $5$\%. We now define the convergence of SDTA algorithm. Recall that in SDTA model, a MSA algorithm is used for updating travel time distribution in each iteration. We check the relative differences of time-dependent policy splits from two successive iterations. We say that convergence is reached if the relative differences become small. This would imply that users stick to their choice of policy and a policy based equilibrium is reached. \\

\begin{figure}[h]
	\centering
	\includegraphics[width=\textwidth]{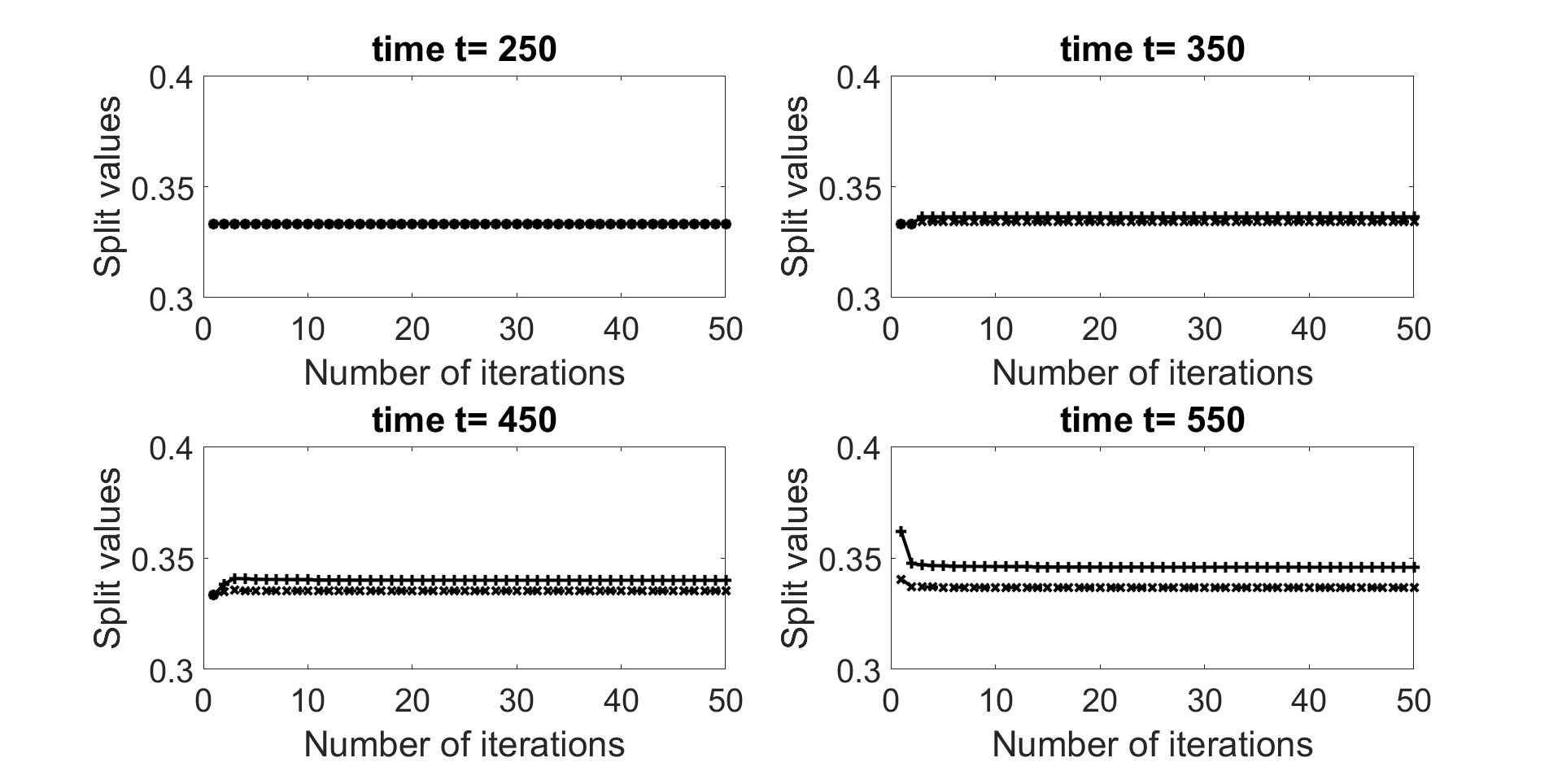}
	\caption{Convergence results for \textit{TwoLinks} network (solid line is split of first policy (the optimal policy) in iterative loading, cross signs represent the split of second policy in iterative loading, plus signs represent splits of first policy in chronological loading and dashed line is split of second policy in chronological loading)}\label{conv_2link}
\end{figure}

\begin{figure}[h]
	\centering
	\includegraphics[width=\textwidth]{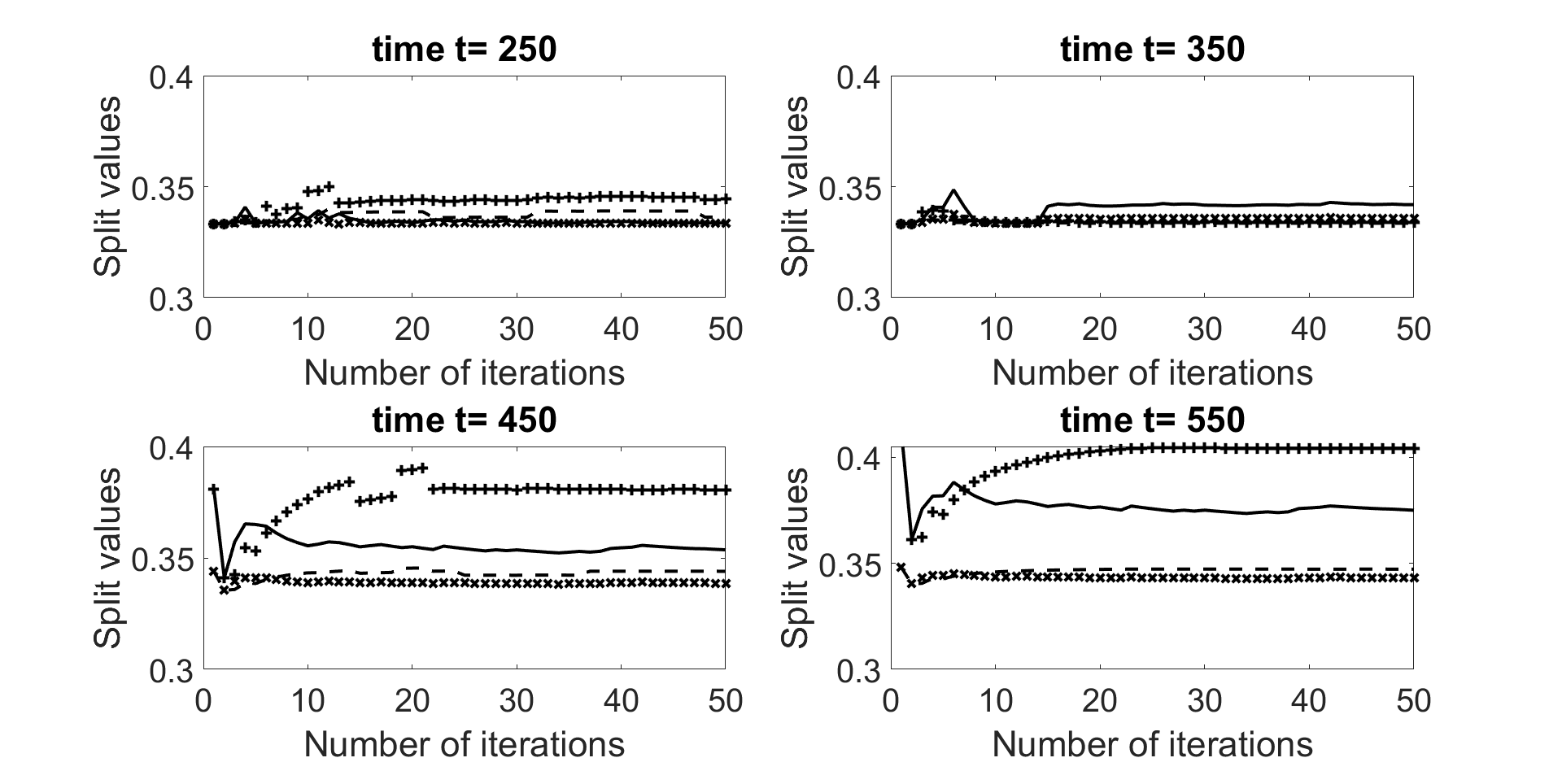}
	\caption{Convergence results for \textit{Diamond} network (solid line is split of first policy in iterative loading, cross signs is split of second policy in iterative loading, plus signs is split of first policy in chronological loading and dashed line is split of second policy in chronological loading)}\label{conv_dia}
\end{figure}

\noindent Figures \ref{conv_2link} and \ref{conv_dia} present convergence results for \textit{TwoLinks} and \textit{Diamond} networks, respectively. We present the results for four time periods (250, 350, 450 and 550) and keep the number of policies per iteration of SDTA algorithm to three. So, we only present the policy splits for the first two policies as the sum of all the policy splits is equal to one. We observe that policy splits converge by 50 iterations for both the loaders as the absolute difference between policy splits of consecutive iterations becomes less than 0.001. Notice that the first policy always has larger split as compared to the second policy for both the loaders. That is because first policy is the optimal policy and therefore has the largest split from Proposition \ref{thm1}. It can also be observed that difference in split values of first and second policies is smaller for time $t=250$ as compared to later times. That is because perturbation of travel time distribution in last time step may not influence the travels starting with early departure time.        \\ 

\noindent Next, we observe that splits of a policy from the two types of the loaders are exactly the same for \textit{TwoLinks} network. This is because there are no diverge nodes in \textit{TwoLinks} network and hence there is no decision making while traversing from origin to destination. Since there is no decision making involved, paths and policies become equivalent. So, it does not matter if the solution is iteratively computed using a path based loader or computed using a chronological policy based loader. For \textit{Diamond} network, though the splits of a policy from both the loaders are not identical but differences in the split values after 50 iterations become small. The maximum observed difference is 0.02 in the final splits from the two loaders. Therefore, the final solutions obtained from both the loaders are similar.     \\    

\subsubsection{Computational performance of different loaders}
\noindent In this section, computational performance of the two aforementioned loaders is discussed. Table \ref{times} presents the computational times while keeping the number of iterations to be the same in both the loaders. First column presents the computational times for iterative loading, second column presents computational times for chronological loading and last column presents the values of second column multiplied by the number of maximum iterations of iterative loading (equal to 5 in our case). It is clear that chronological loading is more efficient than iterative loading. An interesting observation that can be seen is that iterative loading times are higher than the values obtained by multiplying chronological times with the maximum number of iterations of \textit{PathLTM} in iterative loading. This difference arises because of computationally expensive translation function $V$ in iterative loading that involves iterating from the origin to the destination till a path is obtained. This hypothesis is supported by the fact that the difference between the first and third columns is very low for \textit{TwoLinks} network as compared to the same difference for \textit{Diamond} network. That is because while traversing from the origin to the destination in \textit{TwoLinks} network we only to need to cross one node as compared to crossing four nodes in \textit{Diamond} network. Thus, gap between the computation performance of two loaders increases with network size. Hence, we believe that chronological loading is more efficient than iterative loading. Since, the solutions from both the loaders are found to be similar from the previous section we conclude that it is better to use chronological loading over iterative loading. Therefore, from this point onwards we only provide results for chronological loading.\\     
           
\begin{table}[h]
\centering
\tbl{Computational times (in seconds) for different types of loaders and networks}
{\begin{tabular}{cccc} \hline
		Network&Iterative loading  &Chronological loading  & Chronological loading*5   \tabularnewline \hline
		\textit{TwoLinks} network& 62.0 & 6.8&34\tabularnewline \hline
		\textit{Diamond}  network& 1949.5&52.4&262\tabularnewline \hline
	\end{tabular}}
	\label{times}
\end{table} 

\subsubsection{Variation of average travel time with traffic states}
In this section, we consider the effect of stochasticity in link capacities on average expected travel time. Average expected travel time is the expected time it takes to travel from the origin to the destination if optimal policy is followed but averaged over the duration of simulation. Figure \ref{capacity} shows the plot of average expected travel time with different traffic states for \textit{Diamond} network. As mentioned before, we only consider fluctuations in capacity over different realizations for link (2,3). In the first case, we keep the mean capacity of the link to be same for all realizations and there are only small perturbations in the capacity at different time intervals. Later, we gradually reduce capacity of the link for congested and highly congested network realizations. We find that average expected travel time increases as different network states become more congested. This is in accordance with the expectation that the average expected travel time should reduce even if some of the possible network states become more congested.        

\begin{figure}[h]
	\centering
	\includegraphics[width= 0.8
	\textwidth]{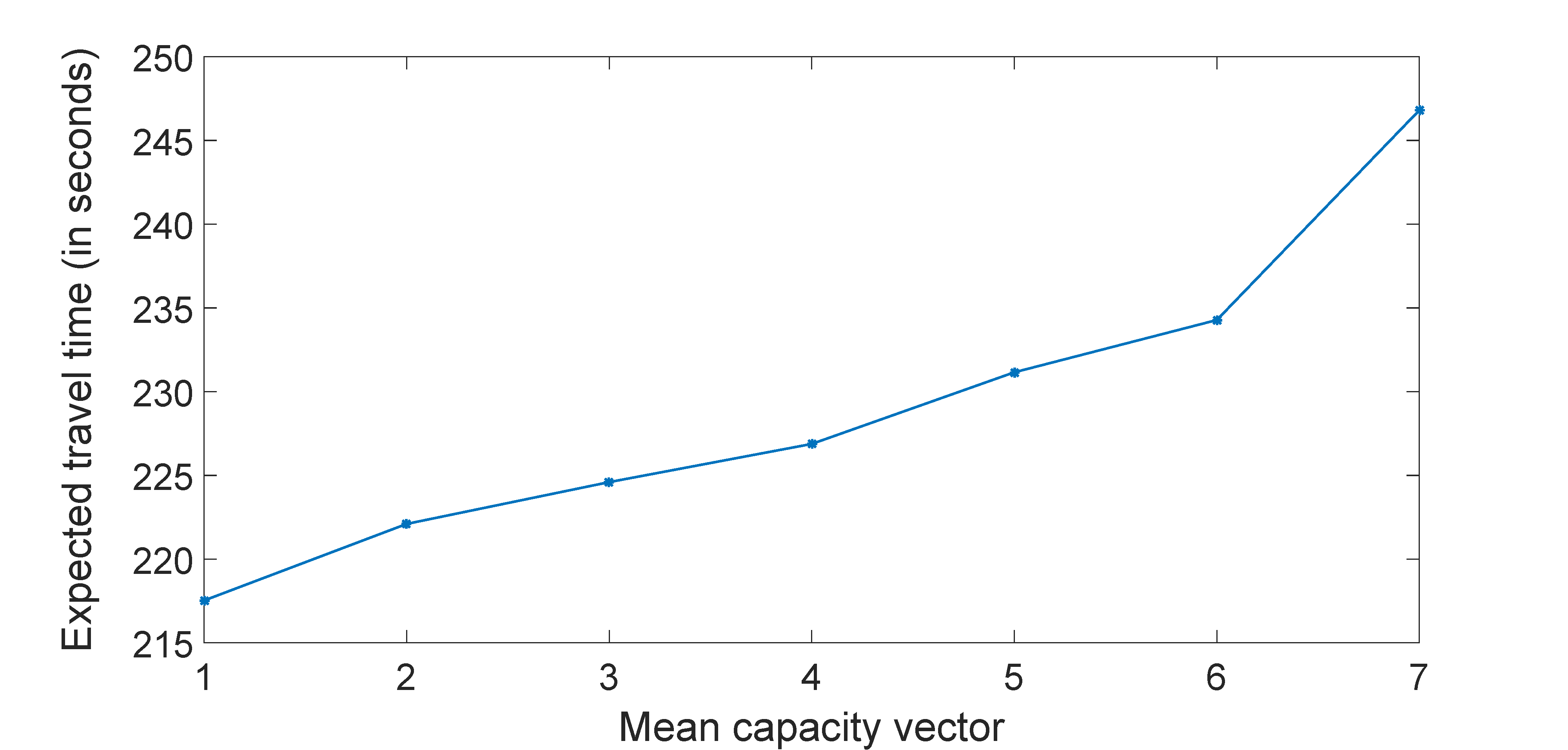}		
	\caption{Variation of average travel time with traffic states (A point on x-axis denotes a vector [a,b,c] where a, b and c are mean capacities (vehicles/second) of link (2,3) in network realizations 1, 2 and 3, respectively. Points 1-7 represent mean capacity vectors [1,1,1], [1,1,0.5], [1,0.9,0.45], [1,0.8,0.4], [1,0.7,0.35], [1,0.6,0.3] and [1,0.5,0.25], respectively. Y axis represents expected time (in seconds) to travel from origin to destination if optimal policy is used but averaged over the whole simulation duration)}\label{capacity}
\end{figure}
\subsubsection{Sensitivity analysis of variation in average travel time with the level of variation}
We now test the sensitivity of the model towards varying levels of variation. One way to quantify the level of variation is through coefficient of variation, which is the ratio of standard deviation to the mean of a probability distribution. Recall that we have variation in the capacities of links at different times of network realizations. Similarly, we also have stochasticity in the demand side. We compute the effect of varying the coefficient of variation (i.e., the level of stochasticity) on the average expected travel time (as defined in the last section). Figure \ref{levelofvar} presents the plot of standard deviation of the average expected travel time with the coefficient of variation for two times during the simulation for \textit{Diamond} network. We compute the variation in the average expected travel time by simulating multiple scenarios from demand and supply distributions and then compute the standard deviation in the end, a technique commonly known as Monte Carlo Simulation \citep{gehlot2016formulation}. It can be seen that as the coefficient of variation increases, the standard deviation of the expected travel time also increases. This is also expected as more stochasticity would produce more uncertainty in the travel times. \\

\begin{figure}[h]
	\centering
	\includegraphics[width= 0.8
	\textwidth]{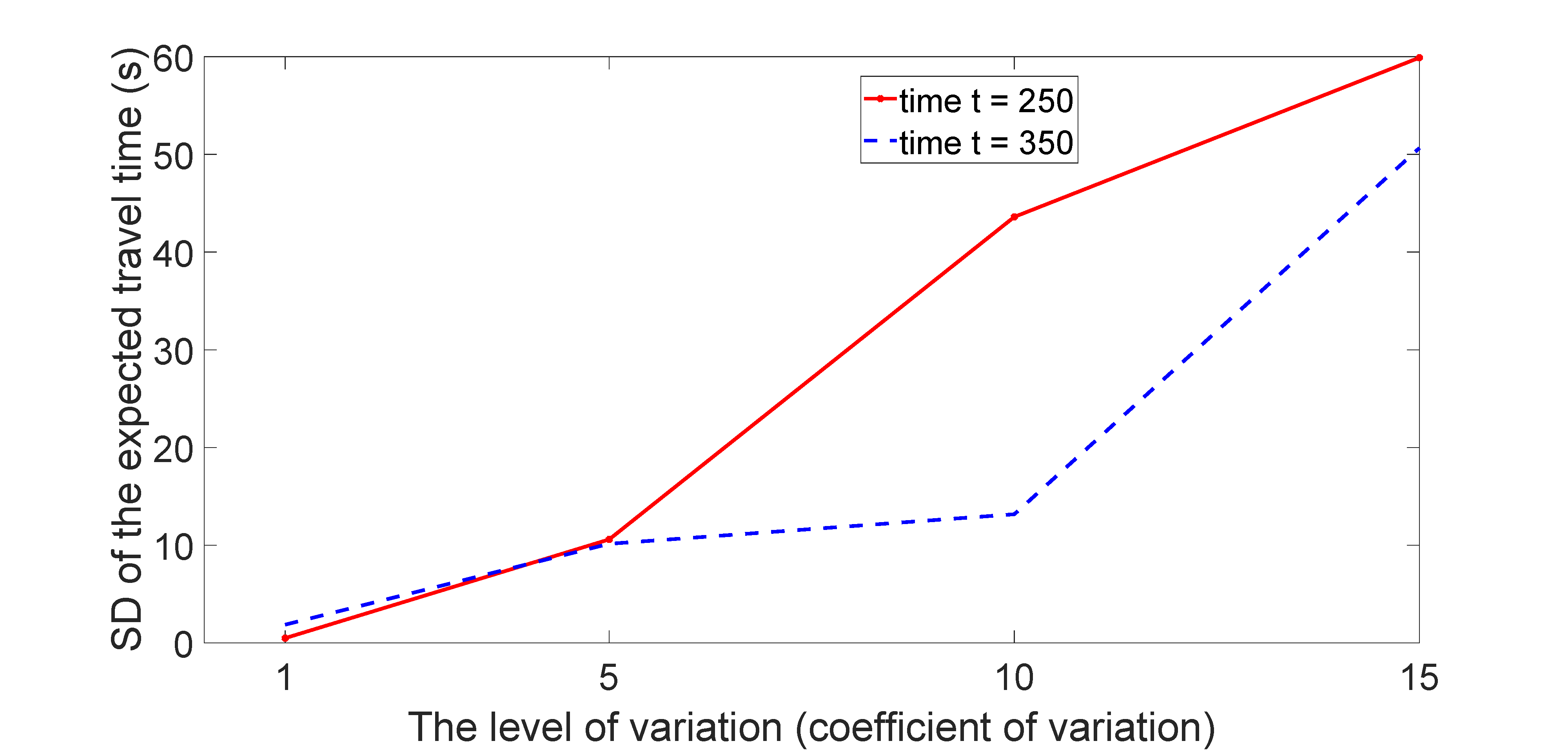}		
	\caption{Variation of standard deviation in the average travel time with different levels of variation (x-axis represents the coefficient of variation that we use in the demand and supply distributions and y-axis represents the observed standard deviation in the expected travel times at two times (equal to 250 and 350) in the simulation)}\label{levelofvar}
\end{figure}
\subsubsection{Sensitivity analysis of policy splits towards perturbation in travel time distribution during suboptimal policies generation}
\noindent We now discuss how policy splits vary with the level of perturbation that is introduced in the travel time distribution while generating suboptimal policies. Figure \ref{policyfac} provides the variation of split of 1st policy (considering two policies in SDTA algorithm) for different levels of perturbation in original travel time distribution in case of \textit{Diamond} network. The level of perturbation is measured in terms of the factors $\{z_{\omega^s}\}$ in \textit{LP-policy} algorithm. Since we only consider one suboptimal policy in this experiment, we are interested in $z_1$. When $z_1$ is 1, there is no change in travel time distribution and split values for both the policies are the same and equal to 0.5. But as $z_1$ increases, split value for the $1st$ policy increases as gap between the expected travel times of optimal and suboptimal policies increases (in accordance with Corollary \ref{cor:largesplit}). Therefore, factors $\{z_{\omega^s}\}$ can be used to control the quality of suboptimal policies as compared with optimal policy. \\

\begin{figure}[h]
	\centering
	\includegraphics[width= 0.8\textwidth]{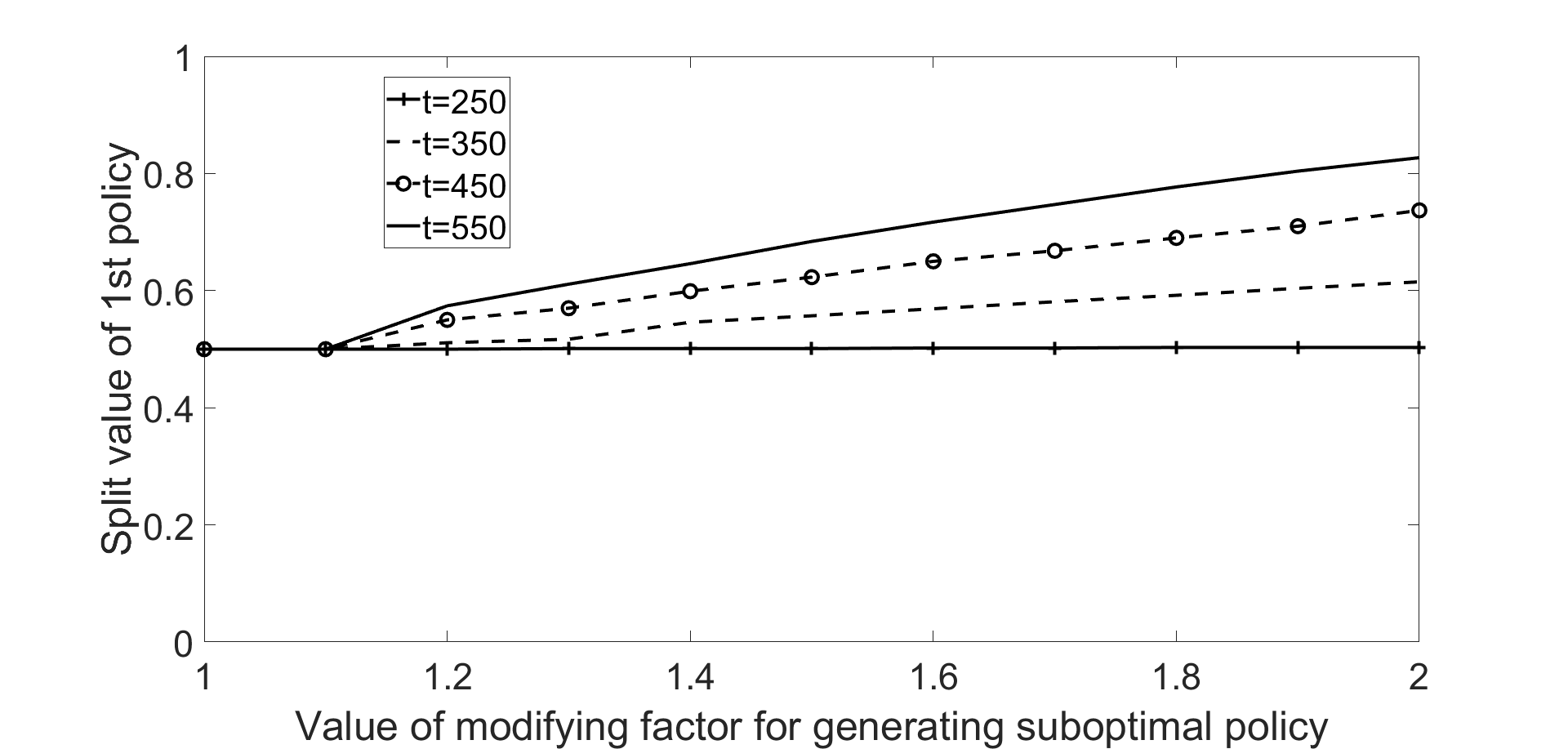}	
	\caption{Variation of split of first policy with the level of perturbation in travel time distribution for \textit{Diamond} network (four lines represent the split values at different times of the simulation)}\label{policyfac}
\end{figure}
\subsection{Modified Sioux Falls network variants}
In this section, we present the computation results on larger networks to check the scalability of the presented SDTA algorithm. Figure \ref{largenet_SF} presents the network which is a variant of Sioux Falls network \citep{ukkusuri2012dynamic}. We modified the original network to restrict the node types to the five types discussed in the Methodology section. We term this network as \textit{SF} network. It has 26 nodes and 36 links with node 1 as the origin and node 26 as the destination. Figure \ref{largenet_2SF} presents the network formed by combining two \textit{SF} networks and is termed as \textit{TwoSF} network.  It has 52 nodes and 73 links with node 1 as the origin and node 52 as the destination. The results on these two networks are compared to understand the computational performance with the size of networks. We first present the convergence results when chronological loading is used to solve the SDTA problem for these two networks. Figure \ref{sf,2sf_conv} presents the convergence of policy splits for two time periods when there are 3 policies in SDTA model. Convergence is reached as absolute differences in split values for consecutive iterations reduces below 0.001. \\   

\begin{figure}[h]
	\centering
	\includegraphics[width=0.45\textwidth]{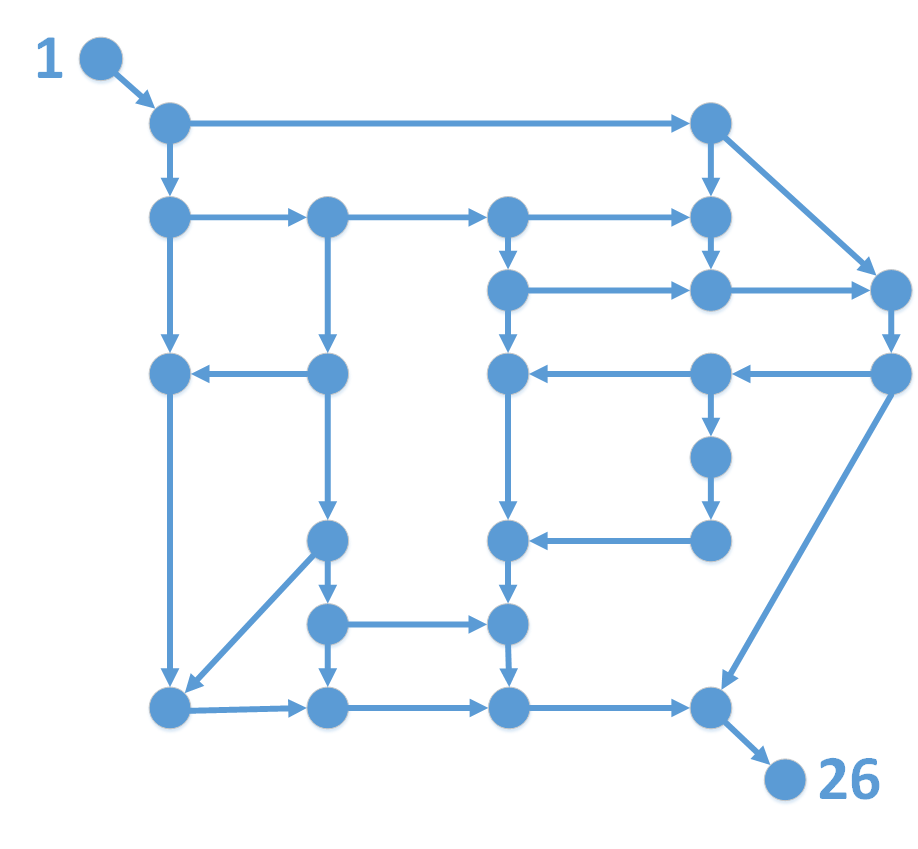}	
	\caption{ \textit{SF} test network}\label{largenet_SF}
\end{figure}
\begin{figure}[h]
	\centering
	\includegraphics[width=\textwidth]{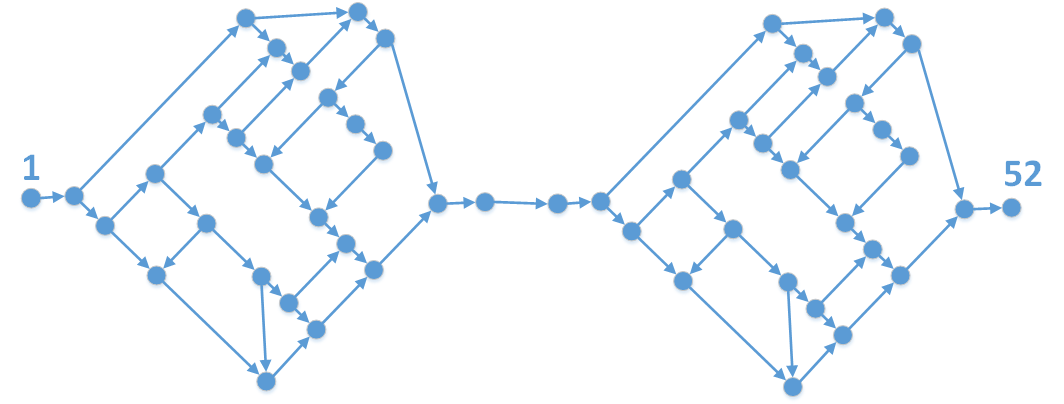}	
	\caption{\textit{TwoSF} test network}\label{largenet_2SF}
\end{figure}

\begin{figure}[h]
	\centering
\includegraphics[width= \textwidth]{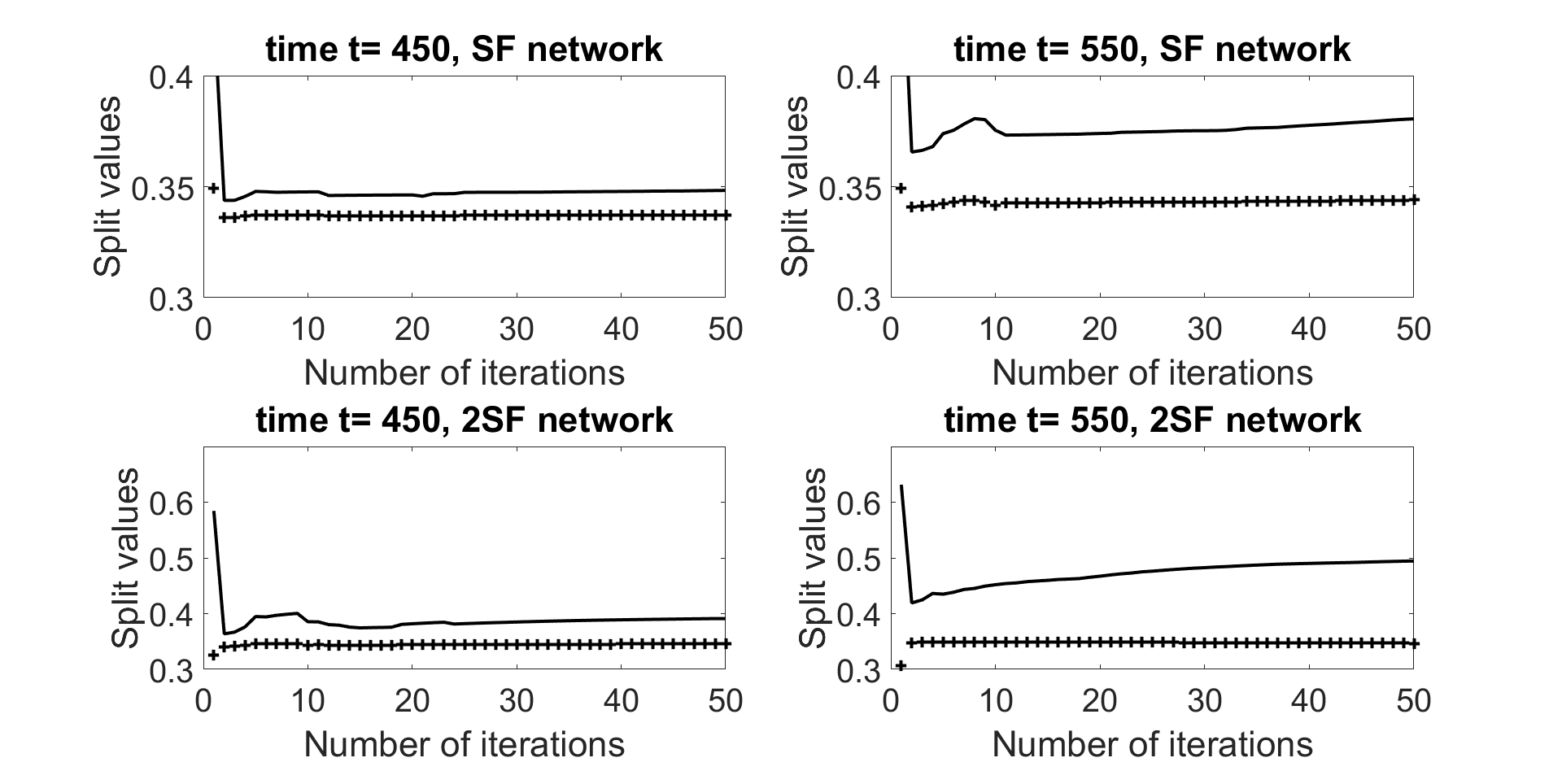}	
	\caption{Convergence results for \textit{SF} and \textit{TwoSF} networks (solid line represents policy split values of first policy and plus signs represent the policy split values of second policy)}\label{sf,2sf_conv}
\end{figure}

\noindent Next, we present the variation of computation time with the number of policies and then discuss the variation of computation performance with the number of network realizations.
\subsubsection{Variation of computation time with policies}
Figure \ref{var_policies} presents the plots of computation time with the number of policies considered in SDTA algorithm. It can be seen that computation time grows in an approximately linear fashion with the number of policies for both the networks. Recall that in Section \ref{complexitysection}, policy generation algorithm is shown to be linear in the number of policies. Also, policy choice model's computation varies linearly with the number of policies. Finally, network loading model is at most linear with the number of policies because algorithm steps like computation of disaggregate transition flows and disaggregate cumulative vehicles updates are linear with the number of policies. Therefore, the observed variation is in accordance with the complexity of SDTA algorithm. \\

\begin{figure}[h]
	\centering
\includegraphics[width= \textwidth]{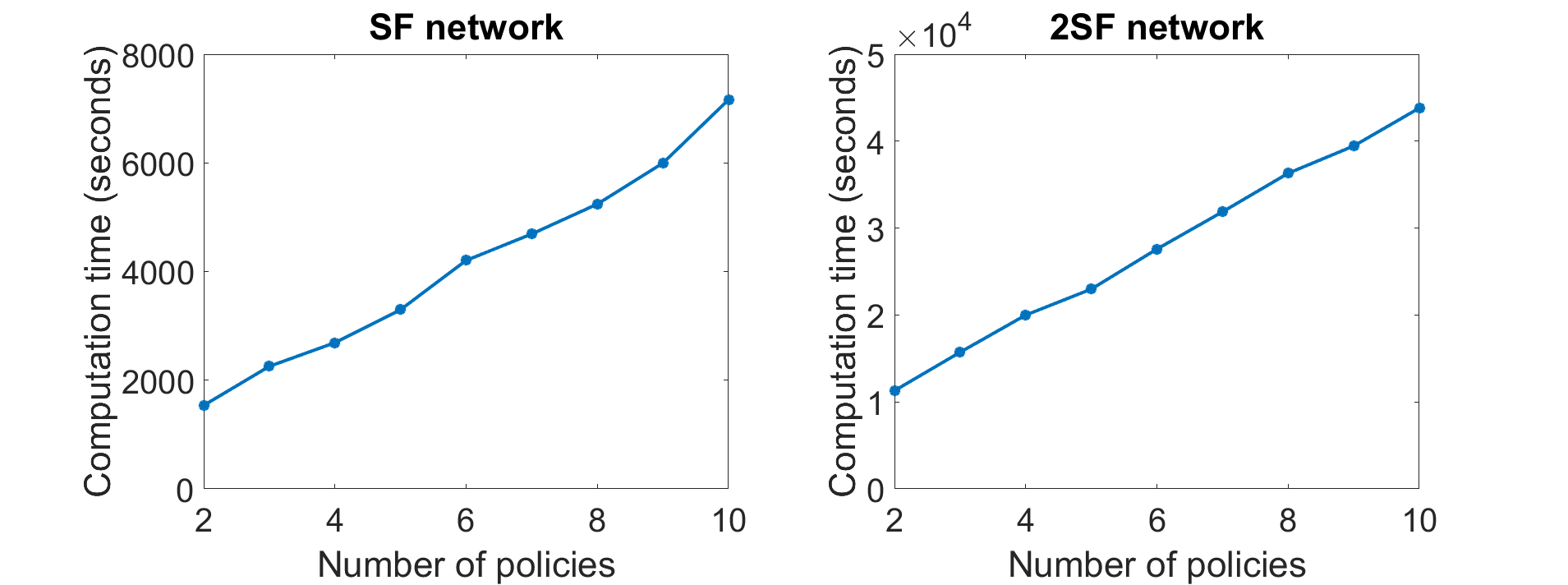}	
	\caption{Variation of computation time with the number of policies for \textit{SF} and \textit{TwoSF} networks}\label{var_policies}
\end{figure}
\subsubsection{Variation of computation time with network realizations}
Figure \ref{var_real} presents the plots showing the variation of computation time with the number of network realizations. It can be seen that the variation is steeper than the variation observed with the number of policies and is non-linear in nature. Note that though network loading algorithm is linear with the number of network realizations the same is not true for policy generation algorithm. As discussed in Section \ref{complexitysection}, policy generation algorithm is $O(R \ln R)$ in the number of network realizations. Therefore, we see the observed pattern for computation time. However, the variation of computation time with the number of realizations is still polynomial\footnote{If link travel times are highly independent then the number of realizations can be exponential in the number of links in the network \citep{gao2005optimal}. However, we assume full link dependency as mentioned before. }. \\

\begin{figure}[h]
	\centering
\includegraphics[width= \textwidth]{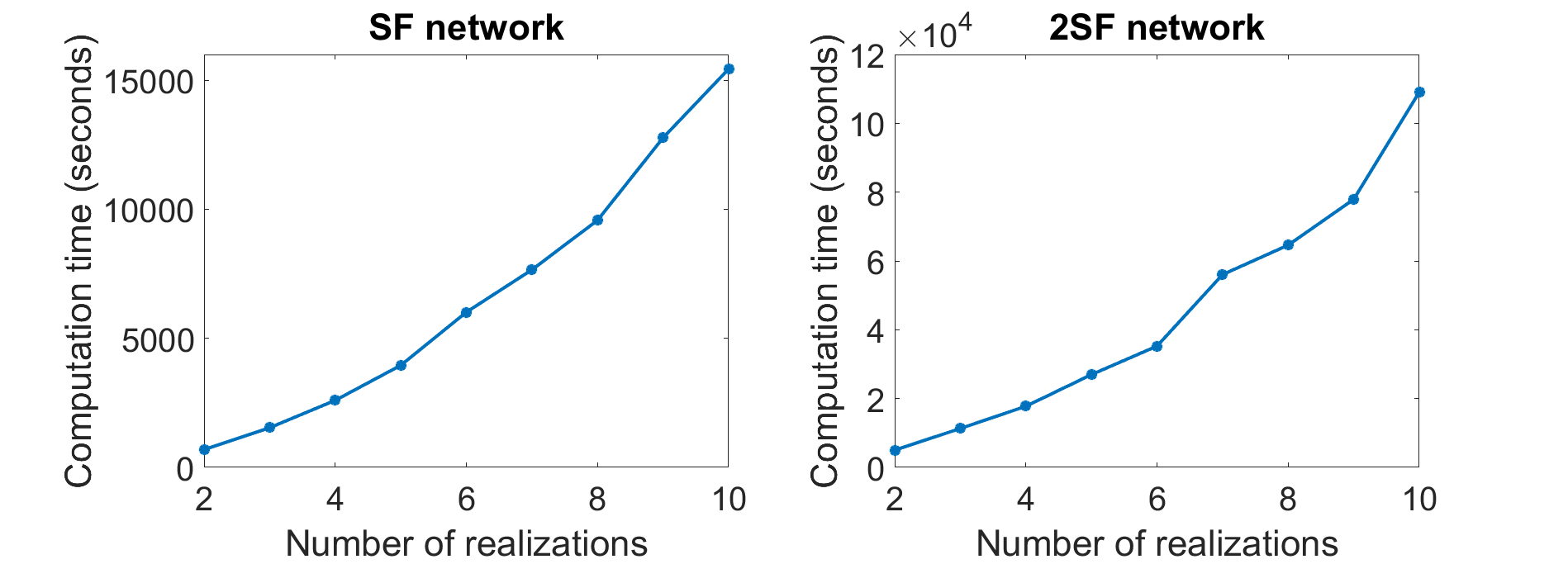}	
	\caption{Variation of computation time with the number of realizations for \textit{SF} and \textit{TwoSF} networks}\label{var_real}
\end{figure}

\noindent We observe that computation performance of SDTA algorithm is polynomial with both the number of policies and network realizations. However, we observe that computation scales significantly with the size of the network. This is because of the increased computation in policy generation and network loading due to increased number of nodes and links. Therefore, further improvement of computational performance through parallelization techniques needs to be explored in future works.        

\section{Conclusions and Summary} \label{conclusions}
\noindent This paper studies an approach to solve the traffic assignment problem for stochastic time dependent traffic networks. The solution algorithm of SDTA problem involves three components: policy generation model, policy choice model and policy based network loading model. In the past, policy based network loading model has been solved by iteratively using a path based network loading model. This requires converting paths to policies before feeding policies into a path based loader. In this study, we propose a policy based network loading model that directly accepts policies as input. For this, we develop a novel LTM algorithm that is capable of representing spatial queues and shockwaves but accepts policies rather than paths as inputs. We conduct computational tests to study the convergence and computational performance of the proposed approach with SDTA solution using an iterative network loading scheme. We find that the proposed approach is more efficient than iterative network loading. \\ 

\noindent In the proposed SDTA solution algorithm, we keep the number of policies to be fixed in each iteration. The first policy is the optimal policy for the input travel time distribution and the remaining policies are generated using a novel algorithm inspired from link penalty heuristic of static shortest path problem. Although, there exist appropriation algorithms to generate suboptimal policies but their results can be arbitrarily worse. We propose an algorithm to generate suboptimal policies that involves perturbation in the elements of input travel time distribution. The advantage of our algorithm is that the quality of generated policies in comparison to optimal policy can be controlled. Also, we show the consistency of the developed algorithm by proving that the optimal policy is always allocated the largest traffic flow. As a consequence, we show that the flow proportion allocated to the optimal policy can be controlled by suitably varying a parameter in the developed suboptimal policy generation algorithm. Numerical tests conducted on sample networks also verify this proposition. We also come up with the sufficient conditions that allow largest flow allocation to the optimal policy if the developed algorithm performs perturbation in the travel time distribution in a different manner.  \\     

\noindent The solution existence of SDTA problem using fixed point theory is also discussed. Finally, we provide numerical results to show the benefits of using the proposed formulation in solving the SDTA problem for traffic networks. The insights from these results are discussed and illustrated. Also, there can be several extensions and related studies to the current study. For instance, the proposed SDTA model can be extended to multiple OD pairs with minor modifications. Improving the computational performance of SDTA algorithm using parallelization techniques also constitutes an important future study. The formulation of chronological network loading can also be extended to other spatial queue network loading models like cell transmission model. In addition, our optimal policy generation algorithm focuses on minimizing the expected travel time without focusing on the associated reliability. Reliability in the general context can be represented as the probability of satisfying a particular condition. This concept is useful when the objective is to satisfy some constraints but it is not possible to do because of the stochasticity present in the system and thus the objective becomes to maximize the probability of satisfying the constraints. In the context of travel time, reliability can be defined as the on-time arrival probability that a trip is successfully fulfilled within a desirable travel time budget \citep{sumalee2011stochastic,chen2014reliable,yang2017optimizing}. This extension will require maximizing the probability of a trip fulfilling a travel time budget along with minimizing the expected travel time in the optimal policy generation. Alternatively, travel time reliability can also be defined by simultaneously minimizing the variance and expected value of the travel time as suggested by \cite{gao2005optimal}.

\bibliographystyle{apa}
\bibliography{LaTeX}  

\end{document}